\newcommand{\Bezier}{B{\'e}zier}
\renewcommand{\emph}[1]{\textbf{\textit{#1}}}
\title{Drawing Planar Graphs and 1-Planar Graphs Using Cubic \Bezier{} Curves
with Bounded Curvature}
\titlerunning{Drawing Planar Graphs and 1-Planar Graphs Using Cubic \Bezier{} Curves}
\author{David Eppstein}{University of California, Irvine, USA \and \url{https://ics.uci.edu/~eppstein/}}{eppstein@uci.edu}{}{}
\author{Michael T. Goodrich}{University of California, Irvine, USA \and
\url{https://ics.uci.edu/~goodrich/}}{goodrich@uci.edu}{https://orcid.org/0000-0002-8943-191X}{}
\author{Abraham M. Illickan}{University of California, Irvine, USA \and \url{https://ics.uci.edu/~theory}}{aillicka@uci.edu}{https://orcid.org/0009-0006-4410-7098}{}
\authorrunning{Eppstein, Goodrich, and Illickan}
\keywords{graph drawing, planar graphs, \Bezier~curves, and RAC drawings} 
\begin{document}

\maketitle
\begin{abstract}
We study algorithms for drawing 
planar graphs and 1-planar graphs using cubic \Bezier{} curves with 
bounded curvature.
We show that any $n$-vertex 1-planar graph has a 
1-planar RAC drawing using a single cubic \Bezier{} curve per edge, 
and this drawing can be computed in $O(n)$ time given a combinatorial
1-planar drawing.
We also show that any $n$-vertex planar graph $G$ can be drawn in
$O(n)$ time with 
a single cubic \Bezier{} curve per edge, 
in an $O(n)\times O(n)$ bounding box, such that the edges
have $\Theta(1/{\rm degree}(v))$ angular
resolution, for each $v\in G$, and $O(\sqrt{n})$ curvature.
\end{abstract}

\section{Introduction}
A \Bezier{} curve is a parametric curve defined by a set of 
\emph{control} points that determine a smooth, 
continuous curve in the plane~\cite{duncan2005bezier,mortenson1999mathematics}.
For example, one of the most common types, a \emph{cubic \Bezier} curve,
is defined by four points, 
$P_0, P_1, P_2, P_3$, 
such that the curve starts at $P_0$ tangent to the line segment
$\overline{P_0P_1}$
and ends at $P_3$ tangent to the line segment $\overline{P_2P_3}$, with the 
lengths of $\overline{P_0P_1}$
and $\overline{P_2P_3}$ determining ``how fast'' the 
curve turns towards $P_1$ before turning towards $P_2$.
Formally, a cubic \Bezier{} curve, $f$, has the following
explicit form (see \cref{fig:bezier}):
\[
f(t)=(1-t)^{3}{P}_{0}+3(1-t)^{2}t{P}_{1}+3(1-t)t^{2}{P}_{2}+t^{3}P_{3},
\mbox{~~~for $0\leq t\leq 1.$}
\]

\begin{figure}[hbt]
\centering
\includegraphics[width=3.5in]{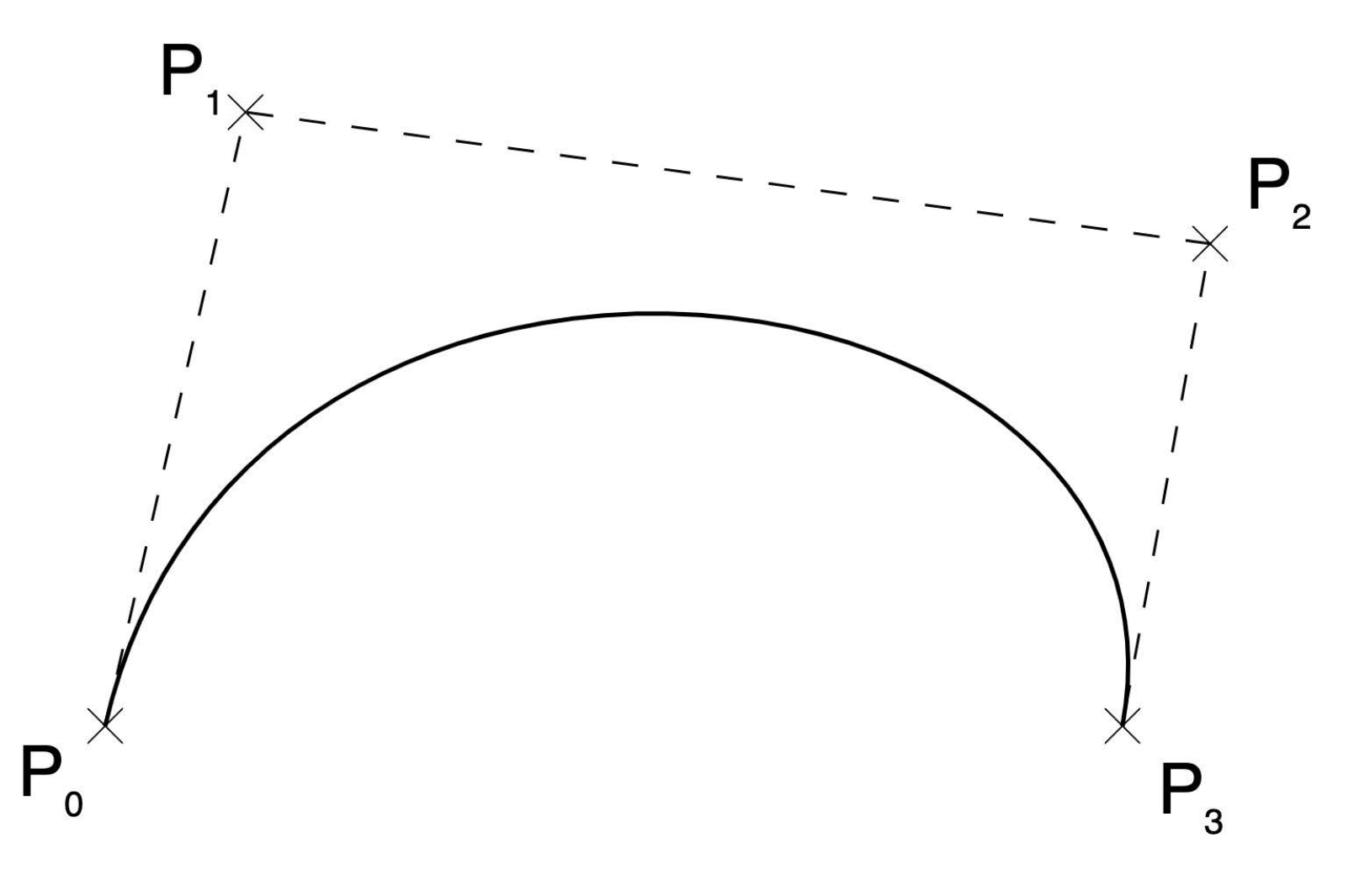}
\caption{An example cubic \Bezier{} curve. Public domain image by MarianSigler.}
\label{fig:bezier}
\end{figure}

The speed in which a curve turns can be characterized by its
\emph{curvature}, which is a measure of the the instantaneous rate 
of change of direction of a point that moves on the curve; hence,
the larger the curvature, the larger this rate of change.
For example, the curvature of a line is zero,
the curvature of a polygonal chain with a bend is $+\infty$, and
the curvature of a circle is the reciprocal of its radius.

Formally, the curvature of a twice-differentiable 
parameterized curve, $\mathbf{c}(t)=(x(t),y(t))$,
can be defined as follows (e.g., see~\cite[p.~890]{calc}):
\[
\kappa(t) = \frac{|x'y''-y'x''|}{(x'^2 + y'^2)^{3/2}},
\]
where $x'$ and $x''$ are the first and second derivatives of $x$, 
and $y'$ and $y''$ are the first and second derivatives of $y$, 
with respect to $t$. For this to be well-defined the curve must be smooth enough to have a second derivative, which is not true for polylines. In such cases the curvature can be thought of as infinite, as a limiting case of smooth perturbations of the given curve.

For a graph drawing, we desire the curvatures of the edges to be small,
where we define the curvature of a drawing of a graph $G$
to be the maximum curvature for a non-vertex point on an edge of $G$, taken
over all edges in the drawing of $G$.
For example, Xu, Rooney, Passmore, Ham, and Nguyen~\cite{xu12} 
empirically show that user performance 
on network tasks is better for low-curvature drawings than for
high-curvature drawings.

Unfortunately, minimizing curvature may conflict with other goals for 
a drawing.
For example, curvature can conflict with angular resolution
for planar drawings of planar graphs.\footnote{Recall that 
  a planar graph can be drawn in the plane without edge crossings
  and a 1-planar graph can be drawn in the plane so that
  each edge crosses at most one other edge. Also recall
  that the angular resolution for each vertex is the minimum angle between two
  edges incident on $v$ in the drawing.}
For example, we can draw an $n$-vertex planar graph without
crossings using straight edges 
(i.e., with curvature~0)
but this can cause angular resolution to be $O(1/n)$, even
for low-degree vertices, e.g., 
see~\cite{de1990draw,kant1996drawing,schnyder1990embedding},
or even worse, e.g., see Tutte~\cite{tutte}.
Indeed, Garg and Tamassia~\cite{garg} show that, in general, the 
best angular resolution that any algorithm 
for drawing a degree-$d$
planar graph $G$ using straight-line drawing can achieve
is $O(\sqrt{(\log d)/d^3})$.
Ideally, we would like the angular resolution for a drawing of a graph
$G$, to be $\Omega(1/{\rm degree}(v))$, for  each $v\in G$, which Goodrich
and Wagner show how to achieve~\cite{GOODRICH2000399}, but their methods for
achieving this bound either use polylines with bends (hence, with infinite
curvature) or with \Bezier{} curves that the authors admit have
high curvature, and they pose as an open problem whether one can
simultaneously achieve good angular resolution and relatively low
curvature for planar graph drawings with edges represented
with cubic \Bezier{} curves.\footnote{For the sake of normalization 
    of the curvature parameter, we assume in this paper that 
    a drawing has an $O(n)\times O(n)$ bounding box, as is common
    for drawings of planar and 1-planar graphs.}

In terms of another trade-off for drawings with curves,
Huang, Eades, Hong, and Duh~\cite{huang16}
empirically show that users performing network tasks 
were quickest with drawings with curved crossing edges rather than 
mixed drawings with no crossings, and the authors
conclude that
for better human graph comprehension, it might be better to use
curves to increase crossing angle, rather than to remove them completely.
Similarly,
Huang, Hong, and Eades~\cite{huang2008effects} 
report on user studies showing that crossings with large angles are 
much less harmful to the readability of drawings than shallow crossings.
Relatedly, there is considerable prior work on right angle crossing (RAC)
drawings, where every pair of crossing edges must cross at right angles,
but these drawings are typically achieved by using polygonal paths
with bends, e.g., 
see~\cite{ANGELINI202042,BEKOS201748,didimo2011drawing,didimo16,suzuki,toth23}; 
hence, these drawings can have unbounded curvature.
Therefore, we are interested 
in methods for producing RAC 
drawings of 1-planar graphs using curves having bounded
curvature, e.g., such as can be achieved with cubic \Bezier{} curves.

\Bezier{} curves are used extensively in computer graphics
applications, where it is common to concatenate \Bezier{} curves
together to form a composite \Bezier{} curve,
e.g., see~\cite{mortenson1999mathematics}.
As long as each connection point is collinear with its two adjacent 
control points, then the resulting composite \Bezier{}
curve will be $C^1$ continuous,
but it will not necessarily have continuous curvature.
In addition, such representations can be quite complex, depending on
the number of pieces used, and the curvature at connection
points might not be well-defined or, even if it exists, it might not
be easy to bound.
Thus, we are interested in this paper on studying drawings of planar
graphs and 1-planar graphs 
using cubic \Bezier{} curves
where each edge is represented with a single cubic \Bezier{} curve,
so that each edge has bounded curvature.
In the case of 1-planar graph drawings, we desire
edge crossings to be at right angles,
and in the case of planar graph drawings, we 
would like to simultaneously achieve good angular resolution and low curvature.

\subsection{Related Prior Work}
There is some notable previous work on using \Bezier{} curves
for graph drawing, which we review below, but we are not aware of previous
work on using \Bezier{} curves to draw planar graphs with low curvature
per edge and optimal angular resolution or for RAC drawings of 1-planar
graphs.\footnote{At a workshop affiliated with GD 2023 to celebrate the 60th 
   birthday of Beppe Liotta, Peter Eades advocated for more research
   on the topic of using \Bezier{} curves to draw graphs, including
   results involving curvature guarantees.}

In addition to the work cited above,
there is some interesting prior work on using
\Bezier{} curves in graph drawing systems.
For example, the Graphviz system of Gansner~\cite{gansner2009drawing}
can render edges using \Bezier{} curves.
Finkel and Tamassia~\cite{finkel} describe a force-directed graph drawing
implementation that uses \Bezier{} curves to render graph edges by integrating
control points into the force equations.
Brandes and Wagner~\cite{brandes98}
visualize railroad systems with some edges rendered using \Bezier{}
curves, and Fink, Haverkort, N{\"o}llenburg, Roberts, Schuhmann, and 
Wolff~\cite{fink13}
provide a similar type of system for drawing metro maps.
The GDot system of Hong, Eades, and Torkel~\cite{hong21} uses
\Bezier{} curves to draw edges in graphs visualized as dot paintings.
In addition, the CelticGraph system of Eades, Gr{\"o}ne, Klein,
Eades, Schreiber, Hailer, and Schreiber~\cite{eades23}
draws graphs using Celtic knots with edges represented
as \Bezier{} curves with limited curvature.

In terms of additional theoretical work,
Eppstein, Goodrich, and Meng~\cite{eppstein2007confluent} show how 
to draw confluent 
layered drawings using \Bezier{} curves that combine multiple edges,
and Eppstein and Simons provide a similar result for Hasse
diagrams~\cite{eppstein12}.
In addition, there is considerable prior work on Lombardi drawings, 
where edges are drawn using circular arcs, 
e.g., see~\cite{roman12,duncan2012lombardi,duncan2012planar,eppstein2012planar,eppstein12,kindermann2017lombardi}, which we consider 
to be related work even though circular arcs are not \Bezier{} curves.
Cheng, Duncan, Goodrich, and Kobourov~\cite{Cheng1999DrawingPG}
show how to draw an $n$-vertex planar graph $G$ with asymptotically
optimal angular resolution, $O(1/{\rm degree}(v))$, for each $v\in G$,
using 1-bend polylines or circle-arc chains, both of which have unbounded
curvature.

\subsection{Our Results}
In this paper, we show how to draw 1-planar graphs as RAC drawings 
using a single cubic \Bezier{} curve for each edge; 
hence, with bounded curvature.
We also show how to draw planar graphs in an $O(n)\times O(n)$ grid
with good angular resolution by rendering each edge using
a cubic \Bezier{} curve with $O(\sqrt{n})$ curvature.
Our methods involve careful constructions and proof techniques for 
proving bounded curvature results, which may be applicable in other
settings.

Our constructions are also based in part on refinements of the 
\emph{convex hull property} of \Bezier{} curves, which is that
every point of a \Bezier{} curve lies inside the convex hull of its defining
control points, e.g., see~\cite{duncan2005bezier}.
In our results, however, the convex
hull property is not sufficient, since the regions in which we 
desire \Bezier{} curves to traverse are more restrictive than just
the convex hulls of control points.
Moreover, the convex hull property says nothing about right-angle crossing
points for pairs of \Bezier{} curves, which is an important component
of our work, and one that requires considerable work, as we show.

\section{Constrained Constructions for Pairs of \Bezier{} Curves}
We show in this paper 
that we can draw any $1$-planar graph in the plane with right
angle crossings, i.e., a RAC drawing,
using \Bezier{} curves for every pair of intersecting
edges and straight line segments for the rest. 

Bekos, Didimo, Liotta, Mehrabi, and Montecchiani~\cite{BEKOS201748}
show that one can draw any 1-planar graph as a RAC drawing where 
each edge is represented by a polyline that has at most one bend.
Their algorithm starts from a 1-planar (combinatorial) embedding of a 
1-planar graph $G$, and proceeds via an induction proof involving
augmentation and contraction steps to produce a RAC drawing
of $G$ with edges represented with polylines with at most one bend
per edge.
We show how to adapt their proof to use a single cubic \Bezier{} 
curve per edge in place of a 1-bend polyline.
To achieve this, we develop a number of constructions for 
pairs of \Bezier{} curves that cross at right angles in specific ways
while fitting in specified polygonal regions.
As mentioned above, our constructions go well beyond the convex hull
property for \Bezier{} curves.
We describe each of these constructions in this section and we show
in the subsequent section how to use these constructions to prove 
our main result for RAC drawings of 1-planar graphs, which is the following.

\begin{restatable}{thm}{planarmain}
\label{thm:planarmain}
Any $n$-vertex $1$-planar graph has a $1$-planar RAC drawing using a single
cubic \Bezier{} curve per edge. 
Further, if a $1$-planar embedding of the graph has been provided, 
a $1$-planar RAC drawing using such cubic \Bezier{} curves 
can be computed in $O(n)$ time.
\end{restatable}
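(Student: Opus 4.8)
The plan is to follow the inductive scaffold of Bekos, Didimo, Liotta, Mehrabi, and Montecchiani~\cite{BEKOS201748} essentially verbatim at the combinatorial level, and to replace every geometric primitive they use — a 1-bend polyline placed inside some triangular or quadrilateral face, with a crossing occurring at a right angle at the bend — with a cubic \Bezier{} curve that lives in the same region and realizes the same incidence/crossing data. Concretely, their proof starts from a 1-planar embedding, repeatedly augments $G$ to a maximal 1-planar graph, identifies the ``kite'' structures around each crossing pair, and then contracts/decomposes the instance so that a planar drawing algorithm (a Schnyder-style or canonical-ordering straight-line drawing) handles the crossing-free part, after which each crossing pair is reinserted locally inside a bounded face. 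So first I would restate their decomposition, isolating the finite menagerie of local configurations that actually need curved edges: a pair of edges crossing inside a convex quadrilateral (the generic kite case), and the degenerate cases where the quadrilateral collapses or where one endpoint must be routed through a thin triangular region. Second, for each such configuration I would invoke the constructions developed earlier in this section (the ``constrained constructions for pairs of \Bezier{} curves'') to produce, for any admissible placement of the four endpoints, two cubic \Bezier{} curves that (i) stay within the prescribed polygonal region, (ii) leave each endpoint along the direction dictated by the planar drawing so that angular resolution and planarity of the non-crossing part are preserved, and (iii) cross each other exactly once and at a right angle.

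The key steps, in order, are: (1) fix the combinatorial pipeline of~\cite{BEKOS201748}, checking that it only ever asks us to draw an edge or a crossing pair inside an explicitly bounded face whose shape we control up to affine normalization; (2) for the straight-line (non-crossing) edges, observe that a straight segment is a degenerate cubic \Bezier{} curve with collinear control points, so those edges need no change and contribute zero curvature; (3) for each crossing pair, apply the earlier lemmas of this section to place the two \Bezier{} curves, using an affine map to reduce the given region to the canonical one handled by those lemmas — here one must check that affine maps preserve the ``single crossing'' and, crucially, that we can still arrange a right-angle crossing (affine maps do not preserve angles, so the right angle has to be imposed in the target coordinates, which is exactly what the constrained constructions are set up to do); (4) verify that the curve endpoints' tangent directions agree with what the surrounding (straight-line) drawing expects, so that no new crossings are introduced at the seams and the drawing remains 1-planar; (5) account for the running time — the planar subroutine runs in $O(n)$, each augmentation/contraction step is $O(1)$ amortized, and each local \Bezier{} construction is $O(1)$, giving $O(n)$ total given the embedding.

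The main obstacle I expect is step (3)–(4): reconciling the right-angle-crossing requirement with the shape constraint. In the polyline proof of~\cite{BEKOS201748}, the bend point can be placed anywhere inside the kite and the two segments meeting it can be aimed independently, so a perpendicular crossing is trivially arranged; with a single cubic \Bezier{} curve per edge, the crossing point is not a free parameter — it is determined by the four control points, which are themselves constrained by the endpoints, the required exit tangents at the endpoints, and the containment region. Showing that, across the full range of admissible kite geometries that the combinatorial reduction can hand us, there is always a choice of the (at most four) free control-point coordinates making the two curves cross once, perpendicularly, and stay inside the region, is the crux; this is precisely why the earlier section builds a toolbox of such constructions rather than a single formula, and the proof here will be a case analysis matching each combinatorial configuration to the appropriate construction. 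A secondary subtlety is ensuring that curvature stays bounded independent of $n$: because every curve is confined to a face of bounded aspect ratio and the construction parameters are bounded away from degeneracy, the curvature formula $\kappa = |x'y''-y'x''|/(x'^2+y'^2)^{3/2}$ evaluates to $O(1)$ on each edge, but this needs to be stated as a uniform bound over the finite family of constructions.
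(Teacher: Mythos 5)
Your proposal is correct and takes essentially the same approach as the paper: it follows the Bekos et al.\ augmentation/contraction/convex-drawing pipeline and reinserts each crossing pair locally using the section's constrained \Bezier{} constructions (a straight segment plus a prescribed-slope curve inside each convex interior quadrilateral, and the triangle/quadrilateral construction for the outer face), with the right angle imposed in the target coordinates exactly as you describe. One caveat: your closing claim of a \emph{uniform} $O(1)$ curvature bound over all edges is not established by (and is not needed for) this theorem --- the paper explicitly leaves a curvature bound in terms of $n$ for the 1-planar case as an open problem, since the recursive construction does not control the aspect ratios of the faces it produces.
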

\addtocounter{theorem}{1}

Since \Bezier{} curves have bounded curvature, we achieve a RAC drawing of any 1-planar graph using edges
with bounded curvature.
We give each of our constructions for constrained pairs of \Bezier{}
curves in the subsections that follow.
Our constructions make use of another property of \Bezier{} 
curves; namely, that applying an affine transformation (e.g., 
a rotation, reflection, translation, or scaling) to a \Bezier{}
curve is equal to the \Bezier{} curve defined by applying that same 
transformation to the original control points, 
e.g., see~\cite{duncan2005bezier}.

\subsection{Right-angle Crossing in a Triangle and Outside a Quadrilateral}
Our first construction is for defining two cubic \Bezier{} curves
that have a right-angle crossing inside a triangle, each with two endpoints
that form a quadrilateral with the base of the triangle, such that
the curves lie outside of that quadrilateral. The curves that we describe are actually quadratic \Bezier{} curves but any quadratic \Bezier{} curve is also a cubic \Bezier{} curve by de Casteljau's algorithm~\cite{decasteljau59,decasteljau63}.
See \cref{fig:tri}.

\begin{figure}[hbt]
\centering
\includegraphics[width=5in]{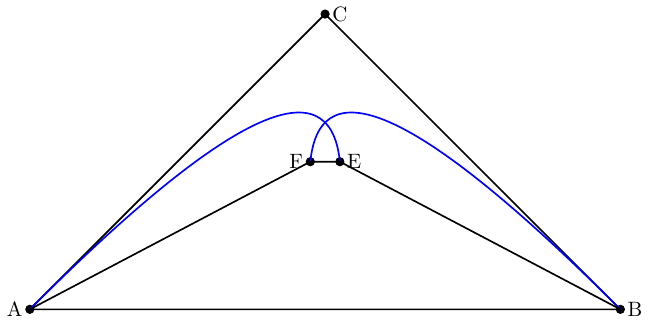}
\caption{Triangle $ABC$ with quadrilateral $ABEF$ and pair of cubic
\Bezier{} curves that cross in a right angle outside the quadrilateral
but inside $ABC$.}
\label{fig:tri}
\end{figure}

\begin{theorem}\label{outside curve}
For any triangle $ABC$, there is a quadrilateral $ABEF$ contained in $ABC$ such that there is a pair of \Bezier{} curves with pairs of endpoints $\{A,E\}$ and $\{B,F\}$ that intersect each other at right angles, are contained within $ABC$,
and lie outside of $ABEF$. The coordinates of $E$ and $F$ and the control points of the \Bezier{} curves can be computed efficiently, given the coordinates of $A$, $B$ and $C$.
\end{theorem}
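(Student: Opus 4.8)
First I would normalize using the fact, recorded above, that an affine image of a \Bezier{} curve is the \Bezier{} curve on the affine image of its control points. Since similarities (translations, rotations, reflections, uniform scalings) preserve right angles, containments, and the property of $ABEF$ being a simple quadrilateral having $AB$ as a side, it is enough to prove the statement after moving $A$ to $(-1,0)$ and $B$ to $(1,0)$ with $C$ in the open upper half-plane. Moreover, the part of the $y$-axis lying in $ABC$ is a segment from the midpoint $(0,0)$ of $AB$ up to a point $(0,h^{\ast})$ with $h^{\ast}>0$, so for every $h\in(0,h^{\ast}]$ the isoceles triangle $T_h=\operatorname{conv}\{(-1,0),(1,0),(0,h)\}$ is contained in $ABC$; it therefore suffices to carry out the construction inside $T_h$, for a convenient $h\le h^{\ast}$, and I would in fact do it for every $h>0$.

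Inside $T_h$ I would use a mirror-symmetric construction about the $y$-axis: take $\gamma_2$ (the $B$--$F$ curve) to be the reflection of $\gamma_1$ (the $A$--$E$ curve), which forces $F$ to be the reflection of $E$, makes $\gamma_1$ and $\gamma_2$ meet on the $y$-axis, and reduces the right-angle requirement to the single condition that $\gamma_1$'s tangent at the meeting point make a $45^\circ$ angle with the $y$-axis. I would take $\gamma_1$ to be the quadratic \Bezier{} curve — hence a cubic one, by de Casteljau — with control points $P_0=A$, $P_1$ and $P_2=E$, and impose two equations for some parameter $t_0\in(\tfrac12,1)$: that $\gamma_1(t_0)$ lie on the $y$-axis, and that $\gamma_1'(t_0)$ be parallel to $(1,-1)$. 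For fixed $t_0$ and fixed $E=(e_1,e_2)$ with $e_1,e_2>0$, these two equations determine the two coordinates of $P_1$; and a direct computation shows that if $E$ is taken very close to the $y$-axis (with $e_1$ comparable to $h^2$ and $e_2$ a small fraction of $h$) and $t_0$ is taken close to $1$ (of the form $1-\Theta(h)$), then the resulting $P_1$ lies in $T_h$. Then $P_1,E$ and their reflections $P_1',F$ all lie in $T_h$, so by the convex-hull property $\gamma_1,\gamma_2\subseteq T_h\subseteq ABC$. Because $E$ (and $F$) are so close to the $y$-axis, the quadrilateral $ABEF$ is a thin ``spike'' hugging $AB$; since $\gamma_1$ leaves $A$ with slope exceeding that of side $AF$, rises well above the height $e_2$ of $E$, and comes down onto $E$ from above, $\gamma_1$ (and by symmetry $\gamma_2$) lies outside $ABEF$. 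Finally, because the $x$-coordinate of $\gamma_1$ is a downward parabola in $t$ with $\gamma_1(1)_x=e_1>0$ lying strictly between its two roots, $\gamma_1$ meets the $y$-axis exactly once on $[0,1]$, and a short check using the symmetry of its (parabolic) $y$-coordinate rules out any other intersection of $\gamma_1$ with $\gamma_2$; thus $X=\gamma_1(t_0)$ is the unique crossing, and it is a right angle. Every quantity produced is a root of a constant-degree polynomial in the coordinates of $A,B,C$, so all of them can be computed in $O(1)$ time.

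The step I expect to be the real obstacle is the ``direct computation'' above: verifying that the control point forced by the two equations stays inside $T_h$ even as the triangle degenerates to a sliver. As $h\to0$ the equation $\gamma_1(t_0)_x=0$ would push $P_1$ arbitrarily far to the left unless $e_1$ is kept of order $h^2$, while the $45^\circ$ condition makes the height of $P_1$ roughly $e_2+h/2$, so $e_2$ must be kept below $h/2$; reconciling these competing requirements \emph{uniformly} in $h$ is the heart of the argument. By scale-invariance it only has to be done for the one-parameter family $\{T_h:h>0\}$, and it comes down to a handful of explicit inequalities among rational functions of $h,t_0,e_1,e_2$. A secondary point that needs care is confirming that the two parabolic arcs $\gamma_1,\gamma_2$ meet in exactly one point, not several.
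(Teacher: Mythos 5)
Your strategy is essentially the paper's: reduce to an isosceles triangle erected on $\overline{AB}$, take a mirror-symmetric pair of quadratic (hence, by degree elevation, cubic) \Bezier{} curves, and obtain the right angle by forcing the tangent at the crossing on the axis of symmetry to have slope $\pm 1$. The only substantive difference is which quantities are solved for. The paper pins the middle control point at the apex $C$ and solves for $E$ (obtaining closed forms for $E_x$ and the crossing parameter), which makes containment in $ABC$ immediate from the convex hull property but requires a separate convexity-of-the-parabola argument to keep the curves off $\overline{EB}$, $\overline{AF}$, and $\overline{EF}$. You instead fix $E,F$ as a thin sliver near the midpoint of $\overline{AB}$ and solve two linear equations for the middle control point, which makes ``outside $ABEF$'' easy but shifts the burden to showing the forced control point stays in $T_h$ uniformly in $h$. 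That deferred computation does close: one may assume $h\le 1$ (since $T_{h^*}\supseteq T_{\min(h^*,1)}$), and with $t_0=1-h/2$, $e_1=h^2/8$, $e_2=h/8$ the two linear conditions give $P_1\approx(h/8,\,h/2)$, which lies in $T_h$; the curve's height exceeds $h/8$ except for $t\le 1/7$, where its $x$-coordinate is at most about $-0.73$ and it lies above the line $AF$, so it avoids the quadrilateral; and the off-axis intersection check succeeds because $\gamma_{1,x}$ is $O(h^2)$ wherever it is positive but has magnitude about $0.7$ at the parameter paired with it by the symmetry of $\gamma_{1,y}$. So the outline is sound and matches the paper's in spirit; only the explicit inequalities you flagged remain to be written out, and they do hold with the constants above.
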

\begin{proof}
It is enough to consider the case of an isosceles triangle 
with base $\overline{AB}$,
since such a triangle can be found within any given triangle $ABD$.
By the equivalence property for \Bezier{} curves under 
angle-preserving affine transformations,
let us assume $A=(0,0),B=(1,0),C=(1/2,C_y)$. Let $E=\left(\frac{1}{4} \left(4 C_y^2+6 C_y+\sqrt{16 C_y^4+48 C_y^3+40 C_y^2+12 C_y+1}+3\right),C_y/2\right)$ with $C_y>0$ and $F=(1-E_x,C_y/2) $. 
Let us define $g_1(t)=At^2+2Ct(1-t)+E(1-t)^2$ and $g_2(t)=Bt^2+2Ct(1-t)+F(1-t)^2$.  In the range $(0,1)$, these curves only meet at one point, 
\[
\left(\frac{1}{2},\frac{\left(-\sqrt{4 C_y^2+8 C_y+1}+6 C_y+1\right) \left(\sqrt{4 C_y^2+8C_y+1}+2 C_y-1\right)}{24 C_y}\right),
\]
when $t=\frac{4 C_y+1}{6 C_y}-\frac{\sqrt{4 C_y^2+8 C_y+1}}{6 C_y}$ for both curves.
    At this point, the slope of one curve is $-1$ and the other is $+1$, 
    so they intersect at a right angle.
    At $t=\frac{\sqrt{-4 E_x^2+8 E_x-3}-2 E_x+1}{2 (1-E_x)}$, $g_2$ is at 
\[
\left(E_x,-\frac{C_y \left(\sqrt{-4 E_x^2+8 E_x-3}-1\right) \left(3 \sqrt{-4 E_x^2+8 E_x-3}-8 E_x+5\right)}{8 (E_x-1)^2}\right),
\]
which is always higher than $E_y=C_y/2$. 
Thus, $g_2$ cannot intersect $\overline{EB}$ or $\overline{EF}$ except at $B$ and $F$ respectively because it is a parabola and $E$ is inside it. Similarly, $g_1$ does not intersect $\overline{AF}$ or $\overline{EF}$ except at $A$ and $F$ respectively.
\end{proof}

\subsection{Right-angle Crossing of a Diagonal of a Convex Quadrilateral}
Our next construction is for a cubic \Bezier{} curve $f$
that replaces a diagonal of a convex quadrilateral so that $f$
has a right-angle crossing with the other diagonal.
See \cref{fig:EFGHRAC}.

\begin{figure}[hbt]
    \centering
        \includegraphics[width=5in]{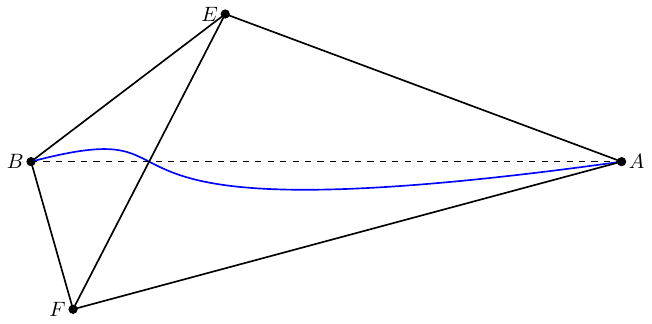}

    \caption{Pair of intersecting diagonals, $\overline{EF}$ and
    $\overline{AB}$, inside 
    a convex quadrilateral, $AEBF$. 
    We replace $\overline{AB}$ with a \Bezier{} curve that intersects 
    $\overline{EF}$ at a right angle.}
    \label{fig:EFGHRAC}
\end{figure}

We actually prove a slightly stronger result, for which, w.l.o.g., we 
assume the diagonal to be replaced is horizontal.

\begin{theorem}\label{insidecurve}
For any convex quadrilateral with horizontal diagonal, $\overline{AB}$, a point 
$X$ on $\overline{AB}$, and any real number, $m$, there is a simple cubic 
Bezier curve with endpoints $A$ and $B$ such that the \Bezier{} curve 
intersects $\overline{AB}$ at $X$ and makes slope $m$ 
at $X$ and the curve is contained in the quadrilateral and also stays in a pair of opposite quadrants around $X$. 
For $m\neq 0$, the curve intersects $\overline{AB}$ only at $A$, $X$, and $B$. The control points of the \Bezier{} curve can be computed efficiently, given the vertices of the quadrilateral.
\end{theorem}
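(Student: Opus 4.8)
The plan is to pass to normalized coordinates, write the curve coordinatewise as a pair of cubic polynomials, and then single out a one-parameter family of candidate curves from which we keep one whose vertical excursion is small enough to fit inside the quadrilateral. First I would reduce to the case $A=(0,0)$, $B=(1,0)$, $X=(p,0)$ with $0<p<1$: a translation followed by an independent horizontal scaling sends horizontal lines to horizontal lines and vertical lines to vertical lines, preserves \Bezier{} curves and containment in the quadrilateral, preserves the four quadrants about $X$ (whose axes are the horizontal and vertical lines through $X$), and merely rescales $m$, which is arbitrary. Reflecting across $\overline{AB}$ if necessary we may assume $m\ge 0$; when $m=0$ the straight segment $\overline{AB}$ already satisfies the statement, so assume $m>0$. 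Write the sought curve as $f(t)=(f_x(t),f_y(t))$ with $f_x$ and $f_y$ cubic polynomials in $t$.

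Next I would fix the shapes of the two coordinate functions. Since $f$ meets the line through $\overline{AB}$ at $A$ (at $t=0$), at $X$ (for some $t^*\in(0,1)$), at $B$ (at $t=1$), and, when $m\neq 0$, nowhere else, the cubic $f_y$ must have exactly these three roots, so $f_y(t)=\eta\, t(t-1)(t-t^*)$; a short sign check shows that in order for the curve to leave $A$ below $\overline{AB}$, cross $X$ with positive slope, and return to $B$ from above we need $\eta<0$, so that $f_y<0$ on $(0,t^*)$ and $f_y>0$ on $(t^*,1)$. For $f_x$ I would take a strictly increasing cubic with $f_x(0)=0$, $f_x(t^*)=p$, $f_x(1)=1$. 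Then $f$ is simple because $f_x$ is injective; moreover $f_x\in(0,p)$ on $(0,t^*)$ and $f_x\in(p,1)$ on $(t^*,1)$, so combining the signs of $f_x-p$ and $f_y$ the whole curve lies in the opposite quadrant pair $\{x<p,\,y<0\}\cup\{x>p,\,y>0\}$ about $X$, with $A$, $X$, $B$ on its boundary, and $f$ meets $\overline{AB}$ only at $A$, $X$, $B$ by the root structure of $f_y$. The slope condition at $X$ then fixes $\eta$ through the relation $\eta\, t^*(t^*-1)=m\, f_x'(t^*)$.

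The crux is to choose $t^*$ and $f_x$ so that $f_x$ stays monotone while the amplitude $\max_{[0,1]}|f_y|\le|\eta|$ is made small. For a fixed $t^*$ (for instance $t^*=\tfrac12$ or $t^*=p$) these two requirements conflict whenever $p\neq\tfrac12$: driving $f_x'(t^*)$, and hence $|\eta|$, to zero forces the quadratic $f_x'$ negative somewhere in $[0,1]$. I would resolve this by letting $t^*$ vary and choosing $f_x$ so that $f_x'$ attains its minimum exactly at $t^*$, i.e. $f_x'(t)=3a(t-t^*)^2+\mu$ with $a>0$ and $\mu>0$, which makes $f_x'\ge\mu>0$ automatically. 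Imposing this form together with $f_x(0)=0$, $f_x(t^*)=p$, $f_x(1)=1$ leaves a one-parameter family; as its free parameter $\mu$ tends to $0$, $t^*$ tends to $s_0:=p^{1/3}/\bigl(p^{1/3}+(1-p)^{1/3}\bigr)\in(0,1)$ (which equals $\tfrac12$ when $p=\tfrac12$), and throughout a neighborhood of that limit one has $a>0$ and $\mu>0$, while $|\eta|=m\mu/\bigl(t^*(1-t^*)\bigr)\to0$.

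Finally I would verify containment in the quadrilateral once $|\eta|$ is small. On a fixed middle interval $[\delta,1-\delta]$ the $x$-coordinate is bounded away from $0$ and $1$, so the curve there lies within $|\eta|$ of a compact sub-segment of the relative interior of $\overline{AB}$, which lies in the interior of the quadrilateral; hence that portion is contained once $|\eta|$ is small enough. Near $A$, and symmetrically near $B$, the quadrilateral pinches to its corner and the strip bound fails, so there I would instead use that the initial tangent direction $(f_x'(0),f_y'(0))$ of $f$ has slope tending to $0$ as $|\eta|\to0$: the curve leaves $A$ in a direction arbitrarily close to that of $\overline{AB}$, which is strictly interior to the angle of the quadrilateral at $A$, so a short initial arc stays inside. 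Choosing $\mu$ small enough for all three pieces at once finishes the proof, and $t^*$, $a$, $\mu$, $\eta$, and the control points (obtained by rewriting $f_x$ and $f_y$ in the Bernstein basis) are all closed-form expressions in $A$, $B$, $X$, and $m$. The one genuinely delicate point is the tension, in the previous paragraph, between monotonicity of $f_x$ and smallness of the amplitude, which the choice $t^*\approx s_0$ is designed to overcome.
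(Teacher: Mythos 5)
Your proposal is correct in substance but follows a genuinely different route from the paper. The paper builds two explicit cubic \Bezier{} curves that both reach $X$ at the same parameter value $t_0$ --- one (\cref{lem:1-right}) crossing $\overline{AB}$ perpendicularly with monotone $x$-coordinate and a free vertical scale $r$, and one (\cref{lem:1-zero}) degenerately tracing the segment --- and then obtains an arbitrary slope $m$ by taking a convex combination of the two, with containment following from the convex hull property applied to a bounding quadrilateral that shrinks with $r$. You instead prescribe the two coordinate polynomials directly: $f_y=\eta\,t(t-1)(t-t^*)$ to force exactly the three required intersections with the line of $\overline{AB}$, and $f_x$ an increasing cubic with $f_x''(t^*)=0$ so that $f_x'\ge\mu>0$ while $f_x'(t^*)=\mu$ can be driven to $0$; the slope relation $\eta t^*(t^*-1)=m\mu$ then sends the amplitude $|\eta|$ to $0$ as well. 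Your identification of the tension between monotonicity of $f_x$ and smallness of $|\eta|$, and its resolution by letting $t^*\to s_0=p^{1/3}/(p^{1/3}+(1-p)^{1/3})$, is the right key idea and checks out (for $p=\tfrac12$ the system degenerates and one should simply fix $t^*=\tfrac12$ and vary $\mu$ directly; for $p\neq\tfrac12$, $\mu>0$ and $a>0$ hold on one side of $s_0$). The one place you should be more quantitative is containment near the corners $A$ and $B$: the tangent direction at $t=0$ being nearly horizontal does not by itself confine a short arc to a narrow cone. It does work here because the entire function $f_y$ scales with $\eta$ while $f_x(t)\ge 3a(t^*-\delta)^2\,t$ on $[0,\delta]$ with a constant bounded away from $0$, giving $|f_y(t)|\le C|\eta|\,f_x(t)$ uniformly; stating that inequality closes the argument. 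By comparison, the paper sidesteps the corner issue entirely by exhibiting a bounding quadrilateral (via the convex hull property plus monotonicity of $f_x$) that shares the vertices $A$ and $B$ and shrinks onto $\overline{AB}$ as $r\to0$. Your approach avoids the paper's case split at $x_0=8/9$ and its heavier explicit formulas, at the cost of determining $t^*$ implicitly (as a root of a low-degree polynomial near $s_0$) rather than in one closed-form expression; both yield efficiently computable control points.
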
 

\label{bezierproof}
We will work with the points $A$ and $B$ being at $(1,0)$ and $(0,0)$ respectively. For any other pair of points we can apply an angle-preserving affine transformation. 
W.l.o.g.,
we also work with the point of intersection being on the line segment
from $(0,0)$ to $(8/9,0)$,
the portion of $\overline{AB}$ closer to $B$, 
by symmetry.
We will first show a curve that is perpendicular at $X$ in \cref{lem:1-right}. This first curve can be bounded by any arbitrary quadrilateral with $\overline{AB}$ as diagonal and also leaves two opposite quadrants around $X$ free for the other diagonal to be drawn as a straight line segment. 
Then, in \cref{lem:1-zero}, we show a curve that is a monotonic straight line that reaches $X$ 
at the same value of the parameter, $t$, as defines the first curve. 
(See~\cref{fig:cases}.)
Finally, in \cref{lem:1-conv}, we take a convex combination of these curves to get the required 
slope.
This third curve is the one that we use.

\begin{figure}[hbt]
\centering
\hspace{-15mm}
        \includegraphics[width=5in]{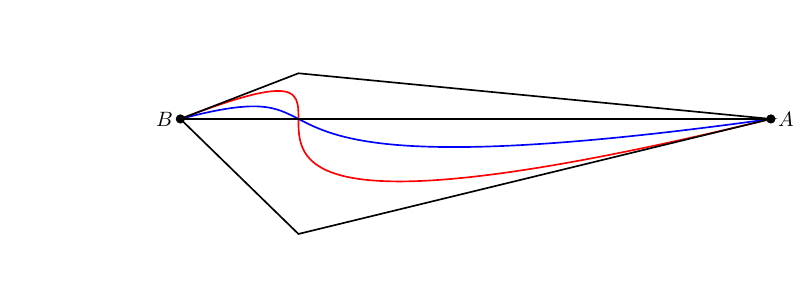}
\caption{
    \Bezier{} curves intersecting the line segment, $\overline{AB}$,
between the endpoints at different angles. 
The red curve is a curve which meets the line segment at that point at right angles. The blue curve is obtained by taking a convex combination with a curve that remains on the line segment and has the same parameter $t$ as the red curve at the point of intersection.
}
    \label{fig:cases}
\end{figure}

\begin{lemma}
\label{lem:1-right}
    Let $A=(1,0)$, $B=(0,0)$, and $X=(x_0,0)$ be a point on the line 
segment $AB$.
For any convex quadrilateral with $\overline{AB}$ as diagonal, there is 
a cubic \Bezier{} curve with $A$ and $B$ as endpoints and $C_1$ and $D_1$ as control points that intersects the 
line segment $\overline{AB}$ at $X$ at a right angle at the parameter
value $t=t_0=\frac{C_{1,x}-2D_{1,x}}{-1+3C_{1,x}-3D_{1,x}}$, and is contained within this quadrilateral and also within any pair of opposite quadrants around the point of intersection.
\end{lemma}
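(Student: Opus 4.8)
The plan is to exhibit an explicit one‑parameter family of candidate curves, all meeting $\overline{AB}$ at $X$ with a vertical tangent, and then to pick the member of the family that is flat enough to fit inside the given quadrilateral. With the coordinates already normalized ($A=(1,0)$, $B=(0,0)$, $X=(x_0,0)$, $0<x_0<1$), the target curve is ``S‑shaped'': starting at $A$ it bulges into one of the two closed quadrants around $X$ on the side $x\ge x_0$ of the vertical line through $X$, returns to $\overline{AB}$ exactly at $X$ with a vertical tangent, bulges into the \emph{opposite} quadrant on the side $x\le x_0$, and ends at $B$. Reflecting across $\overline{AB}$ turns such a curve into one using the other pair of opposite quadrants, so it suffices to build one of the two. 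The reason this is the right shape is that the ``opposite quadrants'' requirement says precisely that each of the two arcs of the curve separated by $X$ stays on its own side of the line $x=x_0$ and on its own side of $\overline{AB}$, and staying on one side of $x=x_0$ is equivalent to the $x$-coordinate of the curve being a monotone function of the Bézier parameter $t$.

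First I would pin down the family. Writing the curve as a cubic Bézier with endpoints $A,B$ and control points $C_1,D_1$, the functions $x(t),y(t)$ are cubics. If the quadratic $x'$ is to have an interior zero at $t_0$ while $x$ is monotone, that zero must be a double root, so $t_0$ is forced to be the inflection point of $x(t)$; a routine computation identifies this inflection point with the value $t_0=\frac{C_{1,x}-2D_{1,x}}{-1+3C_{1,x}-3D_{1,x}}$ in the statement. Imposing the double‑root condition together with the incidence $x(t_0)=x_0$ is two equations in the two unknowns $C_{1,x},D_{1,x}$; solving them in closed form determines $C_{1,x},D_{1,x}$ (and hence $t_0$) as explicit functions of $x_0$ alone. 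The vertical coordinates $C_{1,y},D_{1,y}$ then remain free except for the single linear relation that forces $y(t_0)=0$ (so the curve actually reaches $X$ and not merely the vertical line through it); I parametrize them by one height $h>0$, placing one of $C_1,D_1$ above $\overline{AB}$ and the other below, with heights in the fixed ratio dictated by that relation. Every member of this family meets $\overline{AB}$ at $X$ at the same parameter $t_0$ and at a right angle.

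It remains to check the two containment properties for small $h$. Since $x(t)$ is strictly monotone and equals $x_0$ only at $t_0$, the arcs $t\in[0,t_0]$ and $t\in[t_0,1]$ lie on opposite sides of $x=x_0$; and the linear factor controlling the sign of $y(t)$ changes sign exactly at $t_0$, so each arc also stays on one fixed side of $\overline{AB}$ --- together these give the ``pair of opposite quadrants'' statement (and reflection in $\overline{AB}$ gives the other pair). For the quadrilateral, note that only $C_{1,y},D_{1,y}$ depend on $h$, so as $h\to0$ the control points slide onto $\overline{AB}$ and the curve converges uniformly to a monotone reparametrization of $\overline{AB}$ with $|y(t)|=O(h)$ throughout, while still passing through $X$; because $\overline{AB}$ is a \emph{diagonal} of the quadrilateral its relative interior lies in the quadrilateral's interior, and the curve leaves the vertices $A,B$ at slopes tending to $0$, hence into the interiors of the quadrilateral's angles there. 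So for all sufficiently small $h$ the entire curve lies in the quadrilateral. Strict monotonicity of $x(t)$ also makes the curve a graph $y=g(x)$, hence simple, and $t_0$, $C_1$, $D_1$, and a safe value of $h$ are all given by explicit algebraic formulas in the data, so are computable in constant time.

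The step I expect to be the main obstacle is reconciling the genuine vertical tangent at $X$ with containment in an arbitrarily thin or lopsided quadrilateral: forcing a right‑angle crossing pulls the curve off $\overline{AB}$, and as the quadrilateral degenerates toward the segment the curvature at $X$ blows up, so the convex‑hull property of Bézier curves is too weak to help --- the convex hull of a control polygon producing a vertical tangent is not a thin sliver around $\overline{AB}$. The argument therefore must use the exact structure of the cubic: strict monotonicity of $x(t)$, the single sign change of $y(t)$, and the ability to shrink $h$ without disturbing either of these or the double‑root condition. A secondary thing to verify is that the algebraic system for $C_{1,x},D_{1,x}$ really has a solution with both values in $(0,1)$ and with the leading coefficient of $x'$ of the correct sign, so that $t_0$ is a true double root of a sign‑definite quadratic rather than a saddle of $x$; this is where restricting $x_0$ to a subinterval such as $(0,8/9]$ is convenient.
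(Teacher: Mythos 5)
Your construction is essentially the paper's: the double-root condition on $x'(t)$ (equivalently, placing $t_0$ at the inflection of $x$) together with $x(t_0)=x_0$ pins down $C_{1,x}$ and $D_{1,x}$, the vertical coordinates carry one free scale constrained by the linear relation $y(t_0)=0$, monotonicity of $x$ plus the single sign change of the linear factor of $y$ give the opposite-quadrant containment, and shrinking the vertical scale fits the curve into the given quadrilateral --- the paper makes this last step concrete by exhibiting an explicit bounding quadrilateral with vertices $A$, $B$, and two points at $x=x_0$ whose heights are proportional to the scale parameter $r$, in place of your uniform-convergence argument. Both proofs also restrict to $x_0\le 8/9$ and mirror otherwise, exactly as you anticipated.
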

\begin{proof}
Assume $x_0<8/9$. Otherwise mirror the plane.
    Let $D_1=(D_{1,x},-r)$ and $C_1=(\frac{1}{2}(D_{1,x}-\sqrt{4D_{1,x}-3D_{1,x}^2}),\frac{1-2C_{1,x}+D_{1,x}}{2D_{1,x}-C_{1,x}}r)$ where $r\in \mathbb{R}$. We will define $D_{1,x}$ later. Let $f_1(t)=At^3+3C_1t^2(1-t)+3D_1t(1-t)^2+B(1-t)^3$ be the \Bezier{} curve
we are constructing. Then the $y$-coordinate function for $f_1$ is
    \begin{equation*}
        f_{1,y}(t)=-3\frac{1-2C_{1,x}+D_{1,x}}{2D_{1,x}-C_{1,x}}rt^2(1-t)+3rt(1-t)^2.
    \end{equation*}
    At $t=t_0$, $f_{1,y}(t)=0$. Note that there are three roots for the 
cubic polynomial $f_{1,y}$ and the other two are at $t=0$ and $t=1$.
The $x$-coordinate function for $f_1$ is
    \begin{equation*}
        f_{1,x}(t)=t^3+3\frac{1}{2}(D_{1,x}-\sqrt{4D_{1,x}-3D_{1,x}^2})t^2(1-t)+3D_{1,x}t(1-t)^2 .
    \end{equation*}
    At $t=t_0$, $f_{1,x}(t)=\frac{D_{1,x} \left(3 D_{1,x}+\sqrt{-D_{1,x} (3 D_{1,x}-4)}\right)}{3 D_{1,x}+3
   \sqrt{-D_{1,x} (3 D_{1,x}-4)}+2}$ and we choose $D_{1,x}$ to be $x_0+\sqrt[3]{x_0^2-x_0^3}$ which is a root of $f_{1,x}(t_0)=x_0$ when treated as an equation in $D_{1,x}$. This root is valid whenever $C_{1,x}$ is real, which happens when $4D_{1,x}-3D_{1,x}^2\geq 0$, which is true when $0\leq D_{1,x}\leq4/3$. This happens when $0\leq x_0\leq8/9$. Moreover,
\begin{eqnarray*}
        \diff{f_{1,y}}{t}=-\frac{3 r \left(C_{1,x} \left(9 t^2-8 t+1\right)+D_{1,x}
   \left(-9 t^2+10 t-2\right)+(2-3 t) t\right)}{C_{1,x}-2
   D_{1,x}} \mbox{~~~and}
\\
        \diff{f_{1,x}}{t}=3 \left(t (C_{1,x} (2-3 t)+t)+D_{1,x} \left(3 t^2-4
   t+1\right)\right).
\end{eqnarray*}
    At $t=t_0$, 
\[
\diff{f_{1,x}}{f_{1,y}}=\frac{\diff{f_{1,x}}{t}}{\diff{f_{1,y}}{t}}=0, 
\]
which means that the angle at the point of intersection is $\pi/2$. Also, the value of $C_{1,x}$ is such that 
$\diff{f_{1,x}}{t} = 3 t^2 (-3 C_{1,x}+3 D_{1,x}+1)+3 t (2 C_{1,x}-4 D_{1,x})+3 D_{1,x}$ is a quadratic polynomial in terms of $t$ with discriminant $0$. 
This means that $t_0$ is a repeated root and, hence, $f_{1,x}$ is monotonic. 
Thus, the curve remains in a pair of opposite quadrants, 
determined by the sign of $r$.
    Next, we show that this curve is bounded by the quadrilateral $(0,0),(x_0,\frac{-rx_0}{D_{1,x}}),(1,0),(x_0,\frac{r(1-2C_{1,x}+D_{1,x})(x_0-1)}{(C_{1,x}-1)(2D_{1,x}-C_{1,x})})$. This is obtained by the convex hull property of \Bezier{} curves in conjunction with the fact that $f_{1,x}$ is monotonic. For any value of $x_0$, these values are bounded.
    The value of $r$ can be adjusted such that this quadrilateral is contained within any quadrilateral with the diagonal $\overline{AB}$.
\end{proof}

We next address the case for the point of intersection $X$ 
having $x$-coordinate in $(0,8/9)$ and forming an angle of $0$.


\begin{lemma}
\label{lem:1-zero}
    Let $A=(1,0)$, $B=(0,0)$, and $X=(x_0,0)$ a point on the line segment $\overline{AB}$. There is a cubic \Bezier{} curve with $A$ and $B$ as endpoints that is at $X$ when $t=t_0$ as obtained from \cref{lem:1-right} for the same $X$ and monotonically traces the straight line segment $\overline{AB}$.
\end{lemma}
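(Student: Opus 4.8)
The plan is to exhibit a cubic \Bezier{} curve whose image is exactly the segment $\overline{AB}$, parametrized so that it reaches $X=(x_0,0)$ precisely at the parameter value $t_0$ produced by \cref{lem:1-right}. Since $A=(1,0)$ and $B=(0,0)$, the $y$-coordinate function should be identically zero, so I would take all four control points on the $x$-axis: write $f_2(t)=At^3+3C_2t^2(1-t)+3D_2t(1-t)^2+B(1-t)^3$ with $C_2=(C_{2,x},0)$ and $D_2=(D_{2,x},0)$. Then $f_{2,y}\equiv 0$ automatically, and the curve lies on the line through $A$ and $B$; the remaining task is purely one-dimensional.

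First I would recall the explicit value $t_0=\frac{C_{1,x}-2D_{1,x}}{-1+3C_{1,x}-3D_{1,x}}$ from \cref{lem:1-right}, which after substituting the choices made there ($D_{1,x}=x_0+\sqrt[3]{x_0^2-x_0^3}$ and the corresponding $C_{1,x}$) becomes a concrete function of $x_0$ lying in $(0,1)$. Next I would impose the interpolation condition $f_{2,x}(t_0)=x_0$. This is one linear equation in the two unknowns $C_{2,x},D_{2,x}$, so there is a one-parameter family of solutions; I would pick a convenient member — the natural choice being the one that makes $f_{2,x}$ monotone, e.g. by choosing $C_{2,x}$ and $D_{2,x}$ so that $f_{2,x}$ is, up to reparametrization, an affine image of $t$, or more simply by solving for $C_{2,x},D_{2,x}$ under the extra constraint that $\frac{d f_{2,x}}{dt}\geq 0$ on $[0,1]$. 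Then I would verify monotonicity of $f_{2,x}$ directly: $\frac{d f_{2,x}}{dt}$ is a quadratic in $t$, and I would check its discriminant is non-positive (or that it has no root in $(0,1)$) for all $x_0\in(0,8/9)$, which guarantees the curve traces $\overline{AB}$ monotonically from $B$ to $A$ without backtracking. Finally, since $f_{2,x}$ is continuous with $f_{2,x}(0)=0$, $f_{2,x}(1)=1$, monotonicity plus the value $f_{2,x}(t_0)=x_0$ shows the curve passes through $X$ exactly once, at $t=t_0$.

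The main obstacle is the bookkeeping: substituting the messy closed form for $t_0$ (which carries the cube root $\sqrt[3]{x_0^2-x_0^3}$) into the interpolation equation and then certifying that a clean monotone choice of $C_{2,x},D_{2,x}$ exists for the entire range $x_0\in(0,8/9)$. I expect this to reduce to showing a single-variable polynomial (the discriminant of $\frac{d f_{2,x}}{dt}$, or a numerator thereof) has constant sign on that interval, which can be done by factoring or by a short interval-arithmetic / monotonicity argument. Everything else — the vanishing of $f_{2,y}$, the endpoint conditions, and the final "passes through $X$ once at $t_0$" conclusion — is immediate from the construction.
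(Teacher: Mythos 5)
Your proposal is correct and follows essentially the same route as the paper: place all control points on the $x$-axis so that $f_{2,y}\equiv 0$, impose the single interpolation condition $f_{2,x}(t_0)=x_0$, and resolve the remaining freedom so that $f_{2,x}$ is monotone. The paper simply fixes the "convenient member" you leave open by taking the repeated control point $C_2=D_2=\left(\frac{x_0-t_0^3}{3(1-t_0)t_0},0\right)$, for which monotonicity is immediate because the repeated control point lies between $A$ and $B$ (so $\frac{df_{2,x}}{dt}=3\bigl(C_{2,x}(1-t)^2+(1-C_{2,x})t^2\bigr)\ge 0$), avoiding any discriminant computation.
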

\begin{proof}
Assume $x_0<8/9$. Otherwise, mirror the plane.
    Let $C_2=D_2=(\frac{x_0-t_0^3}{3(1-t_0)t_0},0)$ and 
$f_2(t)=At^3+3C_2t^2(1-t)+3D_2t(1-t)^2+B(1-t)^3$ be the \Bezier{} curve
we are constructing.
    It is easy to see that for $f_2$'s $y$-coordinate function, $f_{2,y}=0$ 
throughout.
The $x$-coordinate function is
    \begin{equation*}
        f_{2,x}(t)=t^3+\frac{(1-t)t(x_0-t_0^3)}{(1-t_0)t_0}.
    \end{equation*}
    At $t=t_0$, $f_{2,x}(t)=x_0$. Also, $C_2$ and $D_2$, the repeated control points are between $A$ and $B$.
     
This means that the curve is always moving from 
$B$ to $A$ when $t$ goes from $0$ to $1$, 
without turning back or overshooting $A$.
\end{proof}

Given the previous two lemmas, we now can use them in convex combination.


\begin{lemma}\label{lem:1-conv}
    Let $A=(1,0)$, $B=(0,0)$, $X=(x_0,0)$ a point on the line segment $\overline{AB}$ and $m$ a real number. 
For any convex quadrilateral with $\overline{AB}$ as a horizontal diagonal, 
there is a simple cubic \Bezier{} curve with $A$ and $B$ as endpoints that intersects the line segment $\overline{AB}$ at $X$ with slope $m$ 
for the parameter value $t=t_0$ as obtained from \cref{lem:1-right} for the same $X$ and is contained within this quadrilateral and also within any pair of opposite quadrants around the point of intersection.
\end{lemma}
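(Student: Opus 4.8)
The plan is to produce the required curve as a convex combination of the two curves already built in \cref{lem:1-right} and \cref{lem:1-zero}, using the fact that a convex combination of \Bezier{} curves sharing the same endpoints is again a \Bezier{} curve with those endpoints (the \Bezier{} formula is linear in the control points, so the combined curve is the one whose control points are the corresponding convex combinations). Write $f_1$ for the right-angle curve of \cref{lem:1-right}, keeping its free parameter $r$ undetermined for the moment, and $f_2$ for the straight-segment curve of \cref{lem:1-zero}; both have endpoints $A$, $B$ and both pass through $X$ at the common parameter value $t_0$. For $\lambda\in[0,1)$ set $f=\lambda f_1+(1-\lambda)f_2$; this is a cubic \Bezier{} curve with endpoints $A$ and $B$ and with $f(t_0)=\lambda X+(1-\lambda)X=X$. (If $m=0$, take $\lambda=0$, i.e.\ $f=f_2$.)

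Assume $m\neq 0$. By \cref{lem:1-right}, $f_1'(t_0)$ is a nonzero vertical vector $(0,\beta)$ whose sign is the sign of $r$, and by \cref{lem:1-zero}, $f_2'(t_0)$ is a nonzero horizontal vector $(\alpha,0)$ with $\alpha>0$ (it points from $B$ toward $A$); hence $f'(t_0)=((1-\lambda)\alpha,\ \lambda\beta)$ has slope $\frac{\lambda}{1-\lambda}\cdot\frac{\beta}{\alpha}$ at $X$. I would first invoke \cref{lem:1-right} to choose $|r|$ small enough that $f_1$, together with the quadrilateral that bounds it, lies inside the given convex quadrilateral $Q$, and choose the sign of $r$ to agree with the sign of $m$. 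Then, since $\lambda\mapsto\frac{\lambda}{1-\lambda}$ is a continuous increasing bijection of $[0,1)$ onto $[0,\infty)$, there is a unique $\lambda_0\in[0,1)$ with $\frac{\lambda_0}{1-\lambda_0}\cdot\frac{\beta}{\alpha}=m$. The key point is that the containment constraint (which pins down $|r|$) and the slope constraint (met through the weight $\lambda_0$, not through $r$) are decoupled, so both can be satisfied simultaneously, and flipping the sign of $r$ covers both signs of $m$.

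Finally I would verify containment and simplicity for $f=\lambda_0 f_1+(1-\lambda_0)f_2$. Containment in $Q$: for every $t$ we have $f_1(t)\in Q$ by the choice of $|r|$ and $f_2(t)\in\overline{AB}\subseteq Q$ since $\overline{AB}$ is a diagonal of $Q$, so the convex combination $f(t)$ lies in the convex set $Q$. Opposite quadrants and simplicity: $f_{1,x}$ and $f_{2,x}$ are both strictly increasing on $[0,1]$ and both equal $x_0$ at $t_0$ (monotonicity from \cref{lem:1-right} and from the control-point placement in \cref{lem:1-zero}), so $f_x=\lambda_0 f_{1,x}+(1-\lambda_0)f_{2,x}$ is strictly increasing, with $f_x(t)<x_0$ for $t<t_0$ and $f_x(t)>x_0$ for $t>t_0$; and $f_y=\lambda_0 f_{1,y}$ vanishes only at $t\in\{0,t_0,1\}$ and changes sign at $t_0$. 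Hence $f$ lies in the same pair of opposite quadrants about $X$ as $f_1$, meets $\overline{AB}$ only at $A$, $X$, and $B$, and, being the graph of a function of its $x$-coordinate, is simple. I expect the main obstacle to be exactly the middle step: confirming that the two tangents at $t_0$ are genuinely nonzero and perpendicular (vertical and horizontal, respectively), that the sign of $r$ realizes both signs of $m$, and that fixing $|r|$ for containment still leaves the slope equation solvable in $\lambda_0$, together with the bookkeeping showing the convex combination inherits the strict monotonicity of $f_x$ and the sign pattern of $f_y$.
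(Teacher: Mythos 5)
Your proposal is correct and follows essentially the same route as the paper's proof: form the convex combination $f=\lambda f_1+(1-\lambda)f_2$ of the curves from \cref{lem:1-right} and \cref{lem:1-zero}, use the shared parameter value $t_0$ to keep the intersection at $X$, pick the sign of $r$ to match the sign of $m$ and the weight to realize the slope (the paper writes a closed-form expression for the weight where you argue via the bijection $\lambda\mapsto\lambda/(1-\lambda)$), and conclude containment and simplicity from convexity of the region and monotonicity of the $x$-coordinate. Your version is, if anything, slightly more explicit about why the tangents at $t_0$ are horizontal and vertical and why $f_2(t)\in\overline{AB}\subseteq Q$, but the substance is the same.
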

\begin{proof}
    Let $C_3=kC_1+(1-k)C_2$ and $D_3=kD_1+(1-k)D_2$ for $k\in[0,1]$ where $C_1,D_1$ are control points of a curve $f_1(t)$ obtained from 
the proof of
\cref{lem:1-right} 
and $C_2,D_2$ are control points of a curve $f_2(t)$ obtained 
from the proof of
\cref{lem:1-zero}. 
Let 
\[
f_3(t)=At^3+3C_3t^2(1-t)+3D_3t(1-t)^2+B(1-t)^3=kf_1(t)+(1-k)f_2(t)
\]
be the more-general \Bezier{} curve we are now constructing as 
a convex combination of $f_1$ and $f_2$. 
Clearly, $f_3(t)$ is at $X$ when $t=t_0$.
In addition, we can write the slope of $f_3$ at $t_0$ as follows:
    \begin{equation*}
        \diff{f_{3,y}}{f_{3,x}} ~~=~~
       \frac{\diff{f_{3,y}}{t}}{\diff{f_{3,x}}{t}}
     ~~=~~ \frac{k\diff{f_{1,y}}{t}}{k\diff{f_{1,x}}{t}+(1-k)\diff{f_{2,x}}{t}}. 
    \end{equation*}
    To get a positive slope, we use a positive value of $r$. 
    To get a negative slope, we use a negative value of $r$. 
Values of $k$ in $[0,1]$ will span the full range of slopes
(either positive or negative) for every value of $r$.
    To get a specific slope $m$ at $t=t_0$, we set
    \begin{equation*}
        k=\frac{m\diff{f_{2,x}}{t}}{m(\diff{f_{1,x}}{t}-\diff{f_{2,x}}{t})+\diff{f_{1,y}}{t}} .
    \end{equation*}
    
    Now, we need to show that this curve does not self-intersect. $f_{3,x}$ is monotonic as it is a convex combination of monotonic functions. This means that no $x$-coordinate is repeated and the curve does not self intersect
The curve, $f_3$, is also bounded by the quadrilateral,
$(0,0),(x_0,\frac{-rx_0}{D_{1,x}}),(1,0),(x_0,\frac{r(1-2C_{1,x}+D_{1,x})(x_0-1)}{(C_{1,x}-1)(2D_{1,x}-C_{1,x})})$,
as it is a convex combination of the curves 
from \cref{lem:1-right} and \cref{lem:1-zero},
which are bounded by the same. 
The value of $r$ can be adjusted such that this quadrilateral 
is contained within any arbitrary quadrilateral. Again this curve also remains in a pair of opposite quadrants around $X$ since $f_{3,x}$ is monotonic.
\end{proof}

\section{RAC Drawings of 1-Planar Graphs with \Bezier{} Curves}\label{proof1planar}

We are now ready to prove 
\cref{thm:planarmain}, which is the following.

\planarmain*


The proof goes through three stages,
adapting a proof of
Bekos, Didimo, Liotta, Mehrabi, and Montecchiani~\cite{BEKOS201748} for
RAC drawings of 1-planar graphs with 1-bend polygonal edges.

\subsection{Augmentation}
We start with a $1$-plane combinatorial drawing $G$ of the graph. We call every connected region of the plane bounded by edges and parts of edges a face. The number of such edges or parts is called the length of the face. The induction will be using triangulated $1$-plane multigraphs, that is, $1$-plane multigraphs in which every face is of length $3$.
For every pair of crossing edges $ab,cd$, add edges $ac,cb,bd,ad$ such that the only edges contained within the cycle $acbd$ are $ab$ and $cd$. We call the subgraph consisting of these edges an empty kite. If there are $2$-length faces in this drawing, remove one of the edges recursively until there are no more. Also remove any parallel edge that was crossed. Now all faces in this drawing are either of length three with $2$ vertices and a crossing point or bounded only by vertices and no crossing points. In every face longer than $3$ add a new vertex and connect it to all the vertices on the face. We call this $1$-plane multigraph $G^+$. 
\subsection{Contraction}
A separation pair is a pair of vertices $\{u,v\}$ whose removal disconnects the graph. Lemma 5 in~\cite{BEKOS201748} states that between any separation pair $\{u,v\}$, there exist two parallel edges $e,e'$ such that $\{u,v\}$ is not a separation pair for the graph obtained by removing everything inside the cycle $\left<e,u,e',v\right>$. We call this removed subgraph along with the cycle $G_{uv}$. Replace $G_{uv}$ with a thick edge and iterate until there are no separation pairs. We call this graph $G^*$. $G^*$ is a simple triangulated $1$-plane graph.
\subsection{Drawing}
Obtain graph $H^*$ by removing all crossing pairs in $G^*$. All the faces of $H^*$ have either three or four vertices.
Lemma 7 in~\cite{BEKOS201748} states that $H^*$ is $3$-connected. We can draw it with all faces convex and the outer face as a trapezoid or triangle using Tutte's method~\cite{tutte} or the method of Chiba {\it et al.}~\cite{chiba1984linear} to do it in linear time. Insert all the crossing edges in the interior four length faces by first drawing one of the edges as a straight line segment and then the other with the required slope using \cref{insidecurve}. If the outer face is of length four, use \cref{outside curve} with any large triangle.

For any thick edge $(u,v)$ with a triangle $uvx$ that $G_{uv}$ could be contained in (here $x$ need not be a vertex, it could be any point), consider $H_{uv}$ obtained by removing all crossing pairs. This is also $3$-connected like $H^*$. If the outer face of $H_{uv}$ is of length three and of the form $uvw$, recursively draw it inside $uvx$. If the outer face is of length four, use \cref{outside curve} to obtain the trapezoid and crossing pair and continue recursively.

\section{Drawing Planar Graphs Using \Bezier{} curves}
Suppose we are given an $n$-vertex planar graph $G$.
In this section,
we describe a method of drawing $G$ with asymptotically 
optimal angular resolution, $\Theta(1/{\rm degree}(v))$, for each $v\in G$,
using a single cubic \Bezier{} curve of curvature $O(\sqrt{n})$ for each edge. 
We start from an $O(n)\times O(n)$ grid drawing $D$ with asymptotically 
optimal angular resolution obtained by algorithm 
\texttt{OneBend} from Cheng, Duncan, Goodrich, 
and Kobourov~\cite{Cheng1999DrawingPG}, which uses one-bend edges.
We describe some properties of this drawing in \cref{onebendalgprop}. 
We describe our new drawing using Bezier curves in \cref{drawingdesc}. 
We show that the edges do not cross each other in \cref{sec:planarity}. We show that the vertices in our drawing have asymptocially optimal angular resolution in \cref{sec:angres}. We show that the curvature of our drawing is $O(\sqrt{n})$ in \cref{sec:curvature}.

\subsection{The Drawing Obtained by the \texttt{OneBend} Algorithm}
\label{onebendalgprop}
In the \texttt{OneBend} algorithm, $G$ is drawn 
in a drawing, $D$, in an $O(n)\times O(n)$ grid such that 
every vertex, $v$, has a \emph{joint box}---a square rotated $\pi/4$ of 
width and height $4{\rm degree}(v)+4$, centered at $v$.
Each joint box is divided into six regions. 
(See~\cref{fig:joint}.)
\begin{figure}[hbt]
\centering
\includegraphics[width=6in,trim=0.5in 1.5in 0in 1.5in, clip]{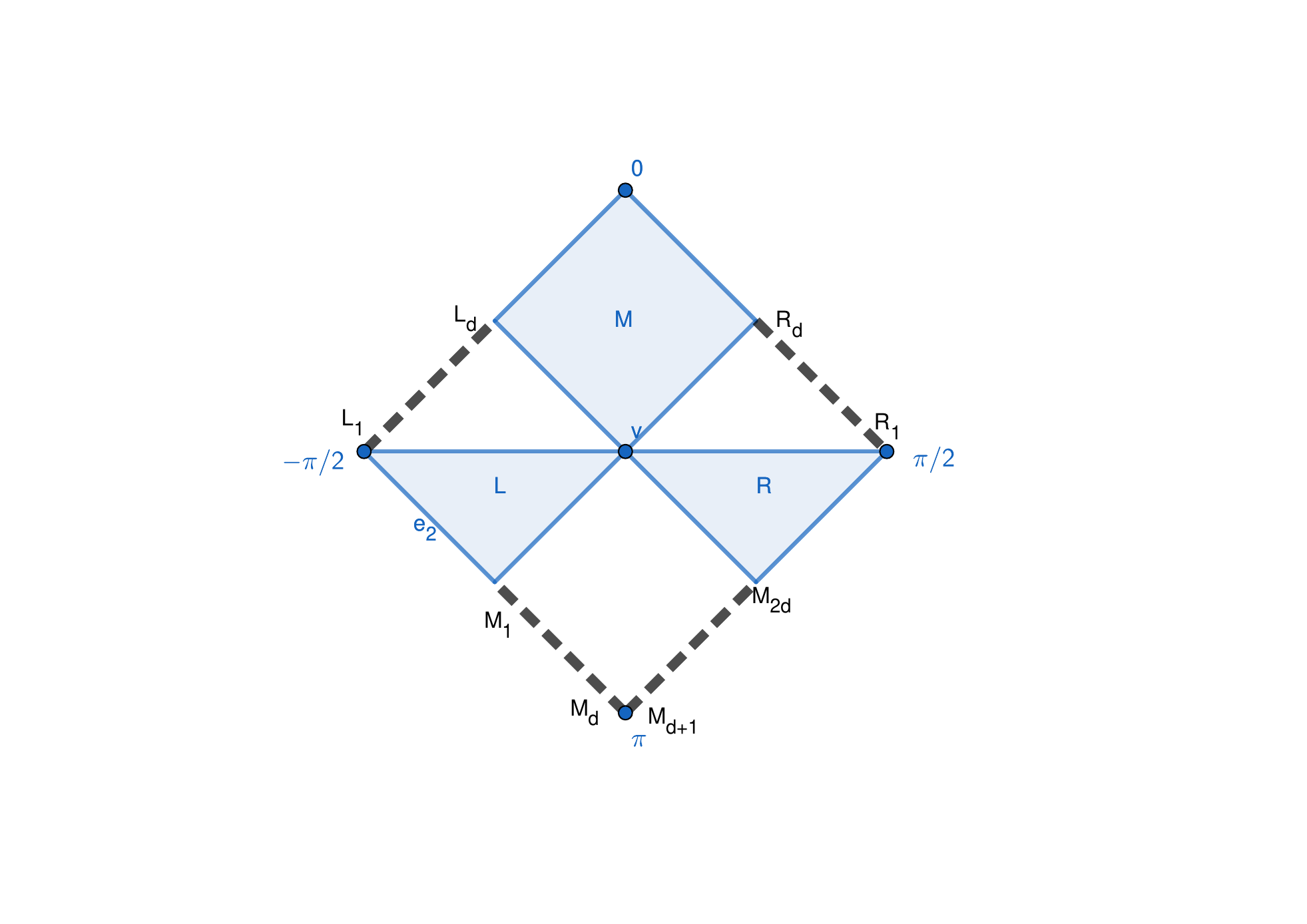}\\
\caption{A joint box for a degree-$d$ vertex, $v$, with left ports, 
$L_1,\ldots,L_d$, right ports, $R_1,\ldots,R_d$,
and middle ports, $M_1,\ldots,M_{2d}$, and free regions,
$L$, $M$, and $R$.
Image from~\cite{Cheng1999DrawingPG}.}
\label{fig:joint}
\end{figure}

The regions of a joint box are of two types, free regions and port regions. 
The free regions are as follows---$M$ is $\pi/4$ on either side of the top corner and $L$ (resp., $R$) is $\pi/4$ below the left (resp., right) corner. 
The port regions are as follows---$M$ is opposite the free $M$ region, 
$L$ (resp., $R$) is between the free $M$ and $L$ (resp., $R$) regions. 
The sides of the square  on port regions have $d$ evenly spaced ports. 
Every edge in the drawing is between a free $M$ region and a port $M$ region, 
a free $L$ region and a port $R$ region, 
or a free $R$ region and a port $L$ region. 
Every edge is drawn as two line segments starting at the endpoints of the edge and meeting at a distinct port on the port region~\cite{Cheng1999DrawingPG}. 

In our construction,
we treat each of the $M$ regions as two regions such that we have $8$ regions which are congruent. However, only one of the two free $M$ regions will have an edge.

\subsection{Drawing a Planar Graph using Cubic \Bezier{} Curves}
\label{drawingdesc}
We take the same embedding of the vertices as given by 
the \texttt{OneBend} algorithm. 
Since these curves can be rotated, translated and reflected without changing the curvature, for any edge
$(A,B)$ which is a 1-bend polyline with the port region at $A$ and 
free region at $B$, without loss of generality, we may assume the following:
$A$ is at the origin, $(0,0)$,
and $B=(b_1,b_2)$ is in the region bounded below by 
the $x$-axis and above by the line $y=x-1$. 
After this transformation the edges through the same port region of $A$ in the drawing obtained by the \texttt{OneBend} algorithm are ported through $(d+1,d+1)+(i,-i)$ for distinct values of $i$ which follow the same order along the vertex and $1\leq i\leq d$. 
In our construction, we replace the polyline edges of
the \texttt{OneBend} algorithm with cubic \Bezier{} curves of the form 
\[
\gamma(t)=Bt^3+3Pt^2(1-t)+3Pt(1-t)^2, 
\]
where $P=(1-k)(1,1)+kQ$, $Q=(1-s)(1,0)+s(3/2,1/2)$, $s=b_2/b_1$, 
and $k=i/(d+1)$. 
The repeated control point, $P$, is a convex combination of $(1,1)$ and $Q$ depending on the parameter, $k$. 
$Q$, in turn, is a convex combination of $(1,0)$ and $(3/2,1/2)$,
depending on the angle $b_2/b_1$ that $B$ makes with the $x$-axis. 
See~\cref{fig:planar-curve}.

\begin{figure}[!t]
    \centering
    \includegraphics[width=5in, trim=0in 2.5in 0in 3in,clip]{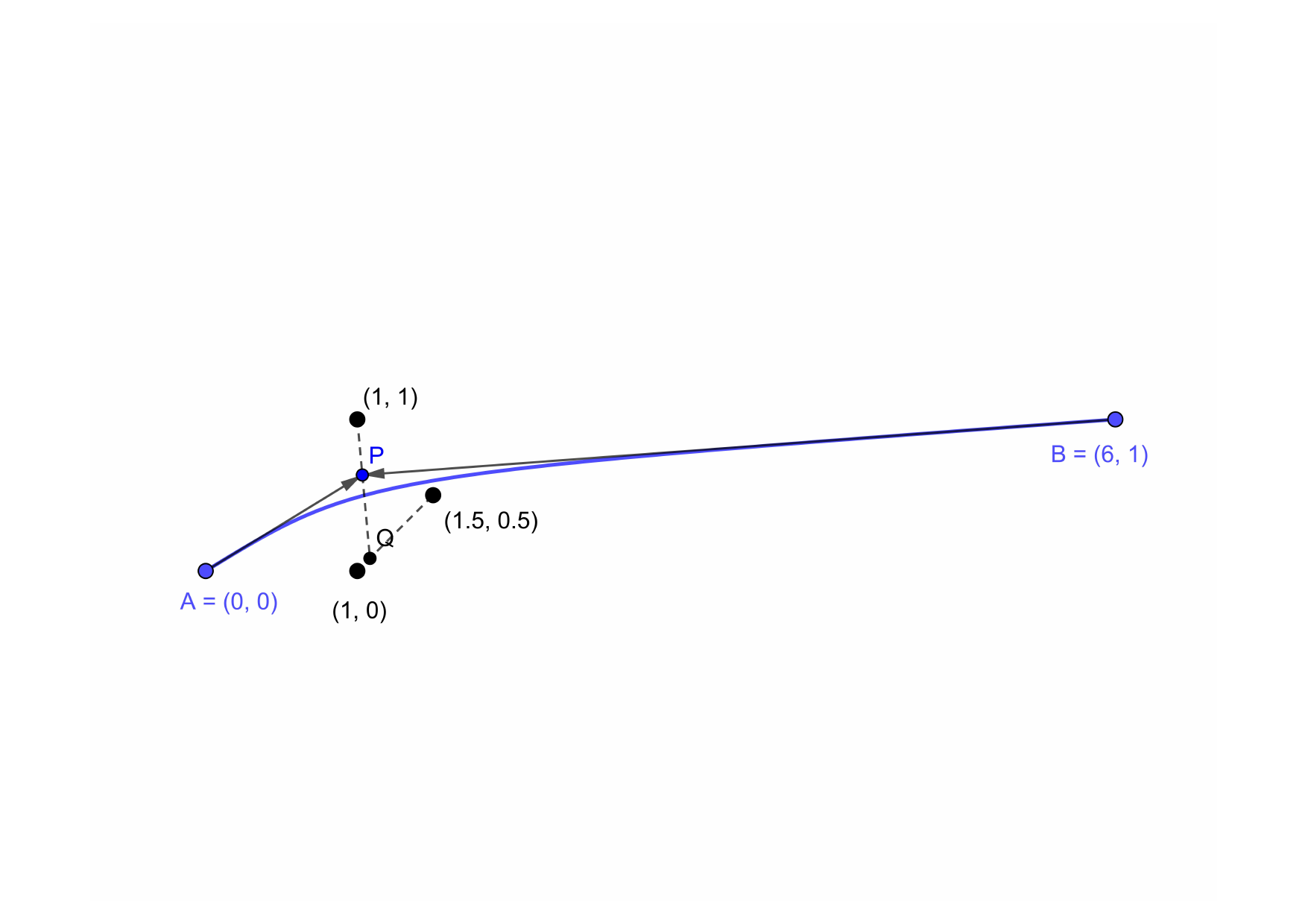}
    \caption{\Bezier{} curve for an edge, $(A,B)$. 
}
    \label{fig:planar-curve}
\end{figure}
\subsubsection{Planarity}\label{sec:planarity}
In this subsection,
we show that the cubic \Bezier{} curves representing 
edges with a common endpoint do not cross each other.
First we show that edges that meet at a common endpoint but through different ports do not intersect. We show that an edge through the ported $R$ region does not intersect with an edge through the free $R$ region or the free $M$ region of the same vertex. A similar argument shows that an edge through the ported $L$ region or either half of the ported $M$ region does not intersect with an edge through an adjacent region.

\begin{lemma}
    $\gamma(t)$ lies in the region $x>y>0$ for $t\in(0,1)$.
\end{lemma}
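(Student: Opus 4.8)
The plan is to put the repeated control point $P$ in closed form and then read off both inequalities by inspecting signs. Substituting $s=b_2/b_1$ and $k=i/(d+1)$ into the definitions gives $Q=(1+s/2,\,s/2)$ and hence $P=\bigl(1+ks/2,\;1-k+ks/2\bigr)$. Two properties of $P$ will do all the work: first, its $x$-coordinate exceeds its $y$-coordinate by exactly $k>0$, so $P$ lies strictly in the half-plane $x>y$; second, its $y$-coordinate equals $1-k(1-s/2)$, which is strictly positive because $k=i/(d+1)<1$ (as $i\le d$) and $s=b_2/b_1\in[0,1)$ (since $0\le b_2\le b_1-1<b_1$). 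Thus $P$ itself lies in the target region $x>y>0$, and $B$ lies in its closure, since $b_2\ge 0$ and $b_2\le b_1-1$ give both $b_2\ge 0$ and $b_1-b_2\ge 1>0$.

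Next I would simplify the curve. Because $A$ is the origin the $(1-t)^3$ term vanishes, and $3t^2(1-t)+3t(1-t)^2=3t(1-t)$, so $\gamma(t)=t^3B+3t(1-t)\,P$; for every $t\in(0,1)$ both coefficients $t^3$ and $3t(1-t)$ are strictly positive. Hence $\gamma(t)$ is a strictly positive combination of the two points $B$ and $P$, and any linear functional that is nonnegative on both $B$ and $P$, and positive on at least one of them, is strictly positive on $\gamma(t)$ for $t\in(0,1)$.

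Applying this to the functional $x-y$ gives $\gamma_x(t)-\gamma_y(t)=t^3(b_1-b_2)+3t(1-t)\,k>0$, using $b_1-b_2\ge 1$ and $k>0$; applying it to the functional $y$ gives $\gamma_y(t)=t^3b_2+3t(1-t)\bigl(1-k+ks/2\bigr)\ge 3t(1-t)\bigl(1-k+ks/2\bigr)>0$, using $b_2\ge 0$ and the positivity of $1-k+ks/2$ noted above. Together these give $\gamma_x(t)>\gamma_y(t)>0$ on $(0,1)$, as claimed. I do not expect a genuine obstacle here: the only delicate point is the constraint bookkeeping — one must use $i\le d$ to get the strict inequality $k<1$ (hence a strictly positive second coordinate of $P$), and both $b_2\ge 0$ and $b_2\le b_1-1$ to sign the coordinates of $B$ — and that the interval is open, since at $t=0$ the curve is at $A=(0,0)$ and at $t=1$ it is at $B$, which may sit on the boundary line $y=0$.
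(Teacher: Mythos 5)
Your proof is correct and takes essentially the same route as the paper: both arguments rest on the region being convex, the interior control point $P$ lying strictly inside it, and the Bernstein coefficients of the non-endpoint control points being strictly positive for $t\in(0,1)$. Your version is simply more explicit (and slightly more careful in letting $B$ sit on the boundary line $y=0$), whereas the paper invokes the convex hull property abstractly.
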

\begin{proof}
The proof follows from the convex hull property of \Bezier{} curves and 
the fact that this region itself is convex. 
The control points except $A$ are within this region. 
Only $A$ could be on the boundary of the region. 
Since all points on the curve except for the endpoints will have 
a positive coefficient for the two intermediate control points, 
these points will lie within the region.
\end{proof}

In other words, an edge ported through the $R$ region will remain
in the wedge of this $R$ region extended to infinity.
Let us next consider whether such an edge will intersect
another one in the free $M$ region.
The control points of any edge through the free $M$ region lie in the region $y\geq x$ and therefore can't intersect with $\gamma$. The control points of any edge through the free $R$ region lie in the region $y\leq 0$ and therefore can't intersect with $\gamma$.

Next we show that two curves through the same ported region do not intersect. 
We will show this for the ported $R$ region but the argument is 
similar for ported $L$ region and both halves of the ported $M$ region.
We actually prove a stronger result, namely, that even if we allowed
for parallel edges with the same two endpoints, they would not cross.

\begin{lemma}\label{changing k}
Suppose $\gamma_1$ and $\gamma_2$ are two edge curves with the 
same endpoints using different values of $k$, $k_1>k_2$, respectively. 
The curves $\gamma_1$ and $\gamma_2$ do not intersect except at the endpoints,
and, except at the endpoints, $\gamma_2$ is above $\gamma_1$.
\end{lemma}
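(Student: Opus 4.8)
The plan is to exploit the fact that the two interior control points of each curve coincide, so each curve collapses to $\gamma(t) = 3t(1-t)P + t^3 B$ (the $A(1-t)^3$ term vanishes since $A=(0,0)$). For the two edge curves in question, $B$, $Q$, and $s=b_2/b_1$ are the same (they depend only on the shared endpoints), and only $P$ changes; since $P = (1-k)(1,1)+kQ$ we get $P_1 - P_2 = (k_1-k_2)(Q-(1,1))$, hence
\[
\gamma_1(t)-\gamma_2(t) = 3t(1-t)(k_1-k_2)\,v, \qquad v:=Q-(1,1)=\bigl(\tfrac{s}{2},\,\tfrac{s}{2}-1\bigr),
\]
where $v$ is a fixed vector and the scalar $3t(1-t)(k_1-k_2)$ is positive on $(0,1)$ and zero exactly at $t\in\{0,1\}$. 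So at every parameter value $\gamma_1(t)$ lies on the ray from $\gamma_2(t)$ in the fixed direction $v$. The ordering claim is then immediate: the $y$-coordinate of $v$ is $\tfrac{s}{2}-1<0$ because $s=b_2/b_1\in[0,1)$, so $\gamma_{1,y}(t)<\gamma_{2,y}(t)$ for all $t\in(0,1)$, which gives ``$\gamma_2$ is above $\gamma_1$'' once non-crossing is established.

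For non-crossing I would project both curves onto a line orthogonal to $v$. Take $w:=(1-\tfrac{s}{2},\,\tfrac{s}{2})$, so that $\langle v,w\rangle = 0$; then $\langle\gamma_1(t),w\rangle = \langle\gamma_2(t),w\rangle$ for every $t$, and I call this common function $\phi(t)$. One checks directly that $\langle(1,1),w\rangle = \langle Q,w\rangle = 1$, so $\langle P,w\rangle = 1$ regardless of $k$, and therefore $\phi(t) = 3t(1-t) + b\,t^3$ with $b:=\langle B,w\rangle = b_1\bigl(1-\tfrac{s}{2}+\tfrac{s^2}{2}\bigr)$. If $\phi$ is strictly increasing on $[0,1]$, then $\gamma_1(t_1)=\gamma_2(t_2)$ forces $\phi(t_1)=\phi(t_2)$, hence $t_1=t_2$, and then $\gamma_1(t_1)=\gamma_2(t_1)$ forces $3t_1(1-t_1)(k_1-k_2)=0$, i.e.\ $t_1\in\{0,1\}$; so the only common points are the two shared endpoints, as claimed.

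It remains—and this is the \emph{crux}—to prove that $\phi$ is strictly increasing, i.e.\ that $\phi'(t)=3(bt^2-2t+1)\ge 0$ on $[0,1]$ with equality only at isolated points. The key inequality is $b\ge 1$: using $1-\tfrac{s}{2}+\tfrac{s^2}{2}\ge 1-s$ (equivalently $\tfrac{s(1+s)}{2}\ge 0$) together with the standing assumption that $B$ lies in the region bounded above by the line $y=x-1$, which is exactly the condition $b_1(1-s)\ge 1$, we get $b\ge b_1(1-s)\ge 1$. Then $bt^2-2t+1\ge t^2-2t+1=(t-1)^2\ge 0$ on $[0,1]$, vanishing only at $t=1$ and only when $b=1$; hence $\phi$ is strictly increasing. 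I expect verifying this bound on $b$—reading off correctly the constraint imposed by the target wedge region and checking the elementary inequality above—to be essentially the only real work; everything else is the collapse identity plus the one-variable monotonicity argument.
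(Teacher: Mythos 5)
Your proof is correct and follows essentially the same route as the paper: the collapse identity $\gamma(t)=Bt^3+3t(1-t)P$, the observation that the curves differ by a positive multiple of the fixed vector $v=Q-(1,1)$, and monotonicity of the common projection orthogonal to $v$ (your $\phi$) are exactly the paper's rotation-plus-horizontal-monotonicity argument. If anything, you are more explicit than the paper at the one nontrivial step, namely verifying $b=\langle B,w\rangle\ge 1$ (equivalently that the repeated control point projects between the endpoints), which the paper asserts without computation.
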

\begin{proof}
    For the common endpoint of $\gamma_1$ and $\gamma_2$,
which has the same $s=b_2/b_1$, the different values of $k$ give different values of $P$ along a line of slope $1-2/s$. For the values of $B$ under consideration, $0\leq s<1$ and hence the slope of this line is less than $-1$.
(See \cref{fig:planar-curve-changing-k}.)
    Consider a rotation such that this line is now perpendicular to the $x$-axis and the two corresponding values of $P$ such that $P_2$ is above $P_1$. All points of $\gamma_2$ would lie directly above the corresponding points of $\gamma_1$. Both curves are monotonic in the horizontal direction since the repeated control point is between the two endpoints in the horizontal direction. The curves do not intersect except at the endpoints. $\gamma_2$ is above $\gamma_1$.
\end{proof}

\begin{figure}[!t]
    \centering
    \includegraphics[width=5in, trim=0.2in 4in 0in 6in, clip]{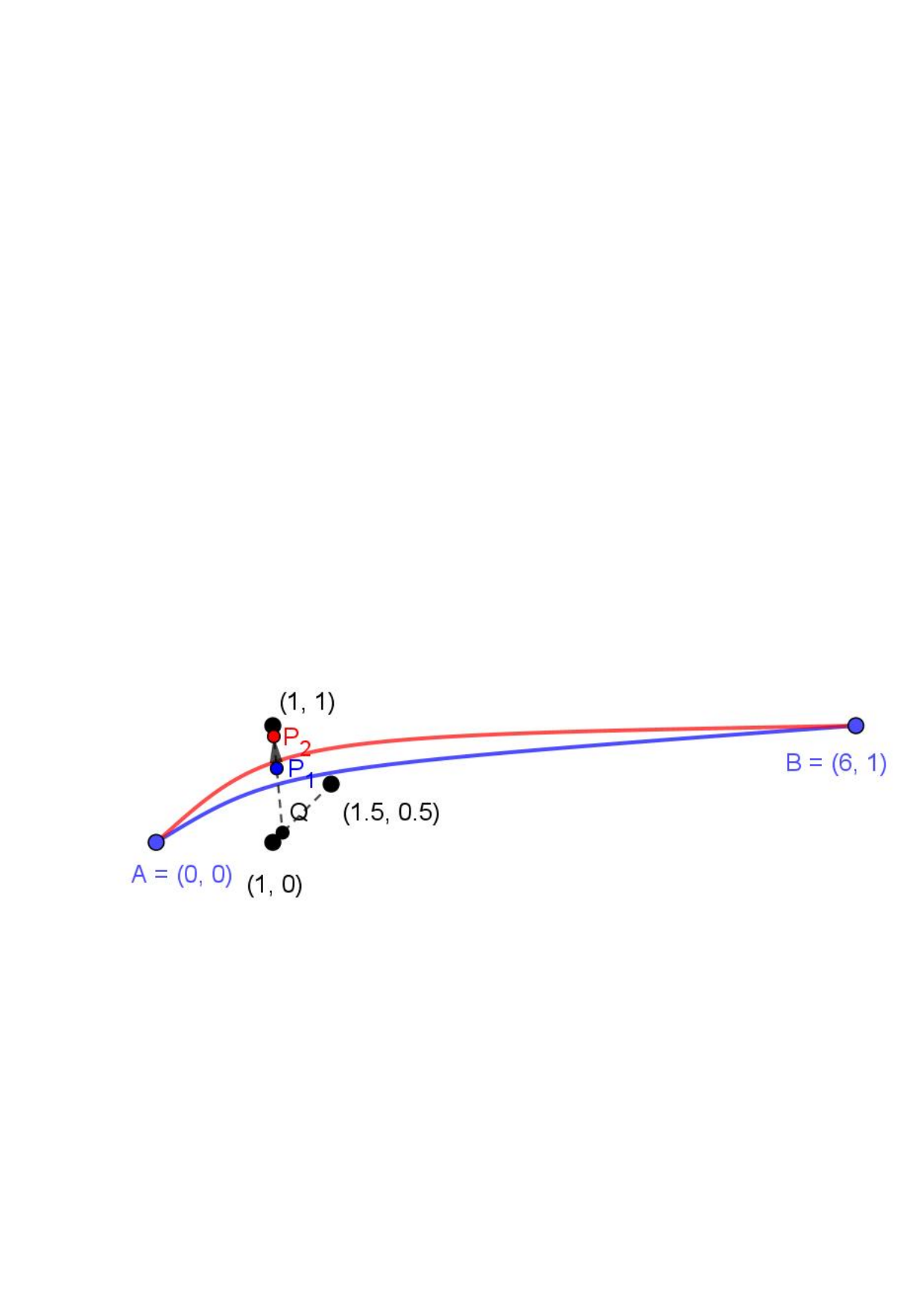}
    \caption{Parallel edges would not cross. As $k$ decreases, all point except the endpoints move up in a direction parallel to $P_1P_2$. The curve itself moves monotonically in the perpendicular direction with $t$.}
    \label{fig:planar-curve-changing-k}
\end{figure}

We next show that two curves with exactly one common endpoint 
and ported through the same $R$ region do not cross.

\begin{lemma}\label{moving B}
    Let $\gamma_1$ and $\gamma_2$ be two curves with the same value of $k$ but different vertices $B$ and $C$ which are endpoints of edges incident on $A$ through the same region in the $\texttt{OneBend}$ algorithm, 
with $B$ coming counterclockwise first. 
The curves intersect only at $A$ and $\gamma_2$ lies 
above $\gamma_1$ everywhere else.
\end{lemma}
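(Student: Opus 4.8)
The plan is to realize $\gamma_1$ and $\gamma_2$ as the two extreme members of a one-parameter family obtained by sliding the free endpoint from $B$ to $C$, and to show this family sweeps monotonically upward, so the two curves are nested rather than crossing. As in the earlier lemmas I would first normalize by an angle-preserving affine map so that $A=(0,0)$ and every free endpoint lies in the convex region $\mathcal R=\{(x,y):0\le y\le x-1\}$; write $B,C\in\mathcal R$, and note that ``$B$ counterclockwise first'' means $s_B:=B_y/B_x<s_C:=C_y/C_x$, both in $[0,1)$. Recall the curve is $\gamma(t)=3Pt(1-t)+Bt^3$ with repeated control point $P=(1-k)(1,1)+kQ$, $Q=(1+s/2,s/2)$, so $P_x=1+ks/2$ and $P_y=1-k(1-s/2)$. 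Since $B\in\mathcal R$ forces $B_x\ge 1/(1-s_B)\ge 1+s_B\ge P_x>0$ (and likewise for $C$), the argument in the proof of \cref{changing k} shows $\gamma_x(t)$ is strictly increasing on $[0,1]$ and positive on $(0,1]$; hence each of $\gamma_1,\gamma_2$ is the graph of a function $y=h_j(x)$ over its $x$-interval $[0,B_x]$ (resp.\ $[0,C_x]$), with $h_j(0)=0$ and initial slope $h_j'(0^+)=P_{j,y}/P_{j,x}>0$ (note $P_y=1-k(1-s/2)>0$ because $k<1$).

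For the deformation, put $B(\lambda)=(1-\lambda)B+\lambda C\in\mathcal R$ for $\lambda\in[0,1]$ (convexity of $\mathcal R$), let $s(\lambda)=B(\lambda)_y/B(\lambda)_x$, let $P(\lambda)$ be the corresponding control point, and let $\gamma^\lambda(t)=3P(\lambda)t(1-t)+B(\lambda)t^3$ with its function graph $h^\lambda$. One checks that $s(\lambda)$ is a strictly monotone (M\"obius) function of $\lambda$ from $s_B$ to $s_C$ --- its derivative has numerator $B_xC_x(s_C-s_B)>0$ --- so each $\gamma^\lambda$ is a legitimate curve of the construction. Because $B(\lambda)_x$ is a convex combination of $B_x$ and $C_x$ it is at least $\min(B_x,C_x)$, so $h^\lambda(x)$ is defined for every $\lambda\in[0,1]$ as soon as $x\le\min(B_x,C_x)$. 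The crux is the claim that, for each fixed $x>0$, the map $\lambda\mapsto h^\lambda(x)$ is strictly increasing. Granting it, $h_1(x)=h^0(x)<h^1(x)=h_2(x)$ on the whole common interval $(0,\min(B_x,C_x)]$, while outside that interval the $x$-ranges of $\gamma_1$ and $\gamma_2$ are disjoint; hence the curves meet only at $A$, with $\gamma_2$ above $\gamma_1$ at every $x$ they share.

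To prove the monotonicity claim I would differentiate implicitly: writing $\tau=\tau(x,\lambda)$ for the parameter with $\gamma^\lambda_x(\tau)=x$, one gets $\partial_\lambda h^\lambda(x)=\Delta(\tau,\lambda)\big/(\gamma^\lambda_x)'(\tau)$, where
\[
\Delta(\tau,\lambda)=(\gamma^\lambda_x)'(\tau)\,\partial_\lambda\gamma^\lambda_y(\tau)-(\gamma^\lambda_y)'(\tau)\,\partial_\lambda\gamma^\lambda_x(\tau)
\]
is the determinant comparing the curve's tangent direction with the direction in which the point is being dragged. Since $(\gamma^\lambda_x)'(\tau)>0$ on $(0,1)$, it suffices to show $\Delta>0$ for all $\tau\in(0,1)$ and $\lambda\in[0,1]$, i.e.\ that moving the endpoint toward $C$ always pushes each interior point of the curve to the left of its direction of travel --- which, the curve being $x$-monotone, means upward. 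Substituting the polynomial forms of $\gamma^\lambda$, its $t$-derivative, and its $\lambda$-derivative --- here $\partial_\lambda B=C-B$ and $\partial_\lambda P=\tfrac{k}{2}s'(\lambda)(1,1)$ --- makes $\Delta$ equal to $\tau$ times an explicit polynomial in $\tau$ whose coefficients are expressions in $k$, $s(\lambda)$, $s'(\lambda)>0$, and the coordinates of $B(\lambda)$; one then verifies that this polynomial is positive on $(0,1)$ using $0<k<1$, $0\le s(\lambda)<1$, and $B(\lambda)_x\ge 1/(1-s(\lambda))$.

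The hard part will be exactly this positivity check, because $\Delta$ is not sign-definite term by term. The contribution of the control-point motion is benign (near $\tau=0$ it equals $\tfrac{9}{2}k^2s'(\lambda)\,\tau>0$), but the contribution of the endpoint motion $C-B$ contains a factor of the shape $(2\tau-1)(sP_x-P_y)$ whose second factor can have either sign, so the endpoint term alone can be negative on part of $(0,1)$. One therefore has to combine the two contributions, exploiting that a large radial displacement of the endpoint --- precisely the regime where the endpoint term is most negative --- forces $s'(\lambda)$ to be correspondingly large near that parameter, so the control-point term compensates. I expect this to reduce to a careful factorization of a degree-four polynomial in $\tau$, perhaps after separating the ranges $\tau\in(0,\tfrac12]$ and $\tau\in[\tfrac12,1)$ and invoking the region constraint $B(\lambda)_x\ge 1/(1-s(\lambda))$ at the right place. (An alternative would be to eliminate one variable from the system $\gamma_1(t_1)=\gamma_2(t_2)$ and check that the resulting univariate polynomial has no root with $t_1,t_2\in(0,1)$; but the deformation argument also delivers the ``$\gamma_2$ above $\gamma_1$'' assertion for free and keeps the algebra to lower degree.)
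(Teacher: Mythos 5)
Your deformation strategy is a genuinely different route from the paper's, but as written it has a real gap at exactly the point you flag as ``the hard part'': the positivity of the determinant $\Delta(\tau,\lambda)$ is never established. You correctly observe that the endpoint-motion contribution to $\Delta$ is not sign-definite and must be compensated by the control-point contribution, but the compensation argument is only sketched as an expectation (``I expect this to reduce to a careful factorization\dots''). Since the entire conclusion rests on $\Delta>0$, the proposal is a plan rather than a proof. Compounding this, your formalization of the hypothesis is weaker than what the construction actually provides: you translate ``$B$ comes counterclockwise first'' as merely $s_B<s_C$ together with $B,C\in\mathcal R$, whereas the relevant geometric fact coming from the \texttt{OneBend} port structure is that the displacement $C-B$ points in a direction making an angle in $[\pi/4,3\pi/4]$ with the $x$-axis (equivalently, $C-B$ has nonpositive first and nonnegative second coordinate after the map $(x,y)\mapsto(x-y,x+y)$). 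It is this stronger cone condition on $C-B$, not just the ordering of the slopes $s$, that the paper exploits; without importing it your positivity check may be substantially harder or may fail on parts of the parameter range you allow.

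For comparison, the paper's proof sidesteps the calculus entirely: it applies the $45^\circ$ change of coordinates $(x,y)\mapsto(x-y,x+y)$, notes that in these coordinates the control values of each curve are monotone (so both coordinate functions of each $\gamma_j$ are monotone in $t$), and that the difference $\gamma_2(t)-\gamma_1(t)=3(P_2-P_1)t(1-t)+(C-B)t^3$ has nonpositive first and nonnegative second coordinate for all $t\in(0,1)$. An intersection $\gamma_1(t_1)=\gamma_2(t_2)$ then forces $t_1\ge t_2$ from one coordinate and $t_1\le t_2$ from the other, with equality only in the degenerate case $B=C$. If you want to keep your homotopy framework, you would still need the cone condition on $C-B$, and at that point the coordinate-wise comparison at equal parameter values is both shorter and avoids the degree-four polynomial analysis altogether.
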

\begin{proof}
    Let $P_1$ and $P_2$ be the corresponding values of $P$. 
Since the $1$-bend edge $BC$ has to pass through a pair of ports as described in \cref{onebendalgprop}, the straight line $BC$ has to make an angle between $\pi/4$ and $3\pi/4$ with the $x$-axis. When moving from $B$ to $C$, $s$ increases or remains the same. This means that $P_2$ is the same as $P_1$ or is further away along a line making angle $\pi/4$ with the $x$-axis.(See \cref{fig:planar-curve-changing-B}.) Consider the transformation $\{x,y\}\mapsto\{x-y,x+y\}$. Within this proof, unless noted otherwise, everything will be considered under this transformation. For any value of $P$, the $x$-coordinate is $k$, which is between $0$ and $1$ and the $y$-coordinate is $2-k(1-s)$ which is between $1$ and $2$. $A$ has both coordinates less than those of $P_1$ and $P_2$. $B$ and $C$ have both coordinates greater than those of $P_1$ and $P_2$ respectively. This means that both curves are monotonic along both axes. $C_x\leq B_x$ and $C_y\geq B_y$ with at least one of the inequalities strict. Suppose there are $t_1,t_2$ such that $\gamma_1(t_1)=\gamma_2(t_2)$. $\gamma_2(t_1)_y\geq\gamma_1(t_1)_y=\gamma_2(t_2)_y$ with equality only when $B_y=C_y$. Monotonicity implies that $t_1\geq t_2$. Similarly, $\gamma_2(t_1)_x\leq\gamma_1(t_1)_x=\gamma_2(t_2)_x$ with equality only when $B_x=C_x$. Monotonicity implies that $t_1\leq t_2$. The curves do not intersect unless $B=C$ which is not the case. $\gamma_2$ lies above $\gamma_1$ because $\gamma_2(t)_y\geq\gamma_1(t)_2$ and the curves do not intersect. 
\end{proof}
For two curves through the same ported region, they will have different endpoints and different values of $k$. The curve with the same value of $k$ as the upper edge and the endpoint the same as the lower edge, lies below the curve representing the upper edge and above the curve representing the lower edge. Therefore, the curves representing the upper and lower edges do not intersect.
\begin{figure}[t]
    \centering
    \includegraphics[width=5in, trim=1.8in 7in 0.8in 3in, clip]{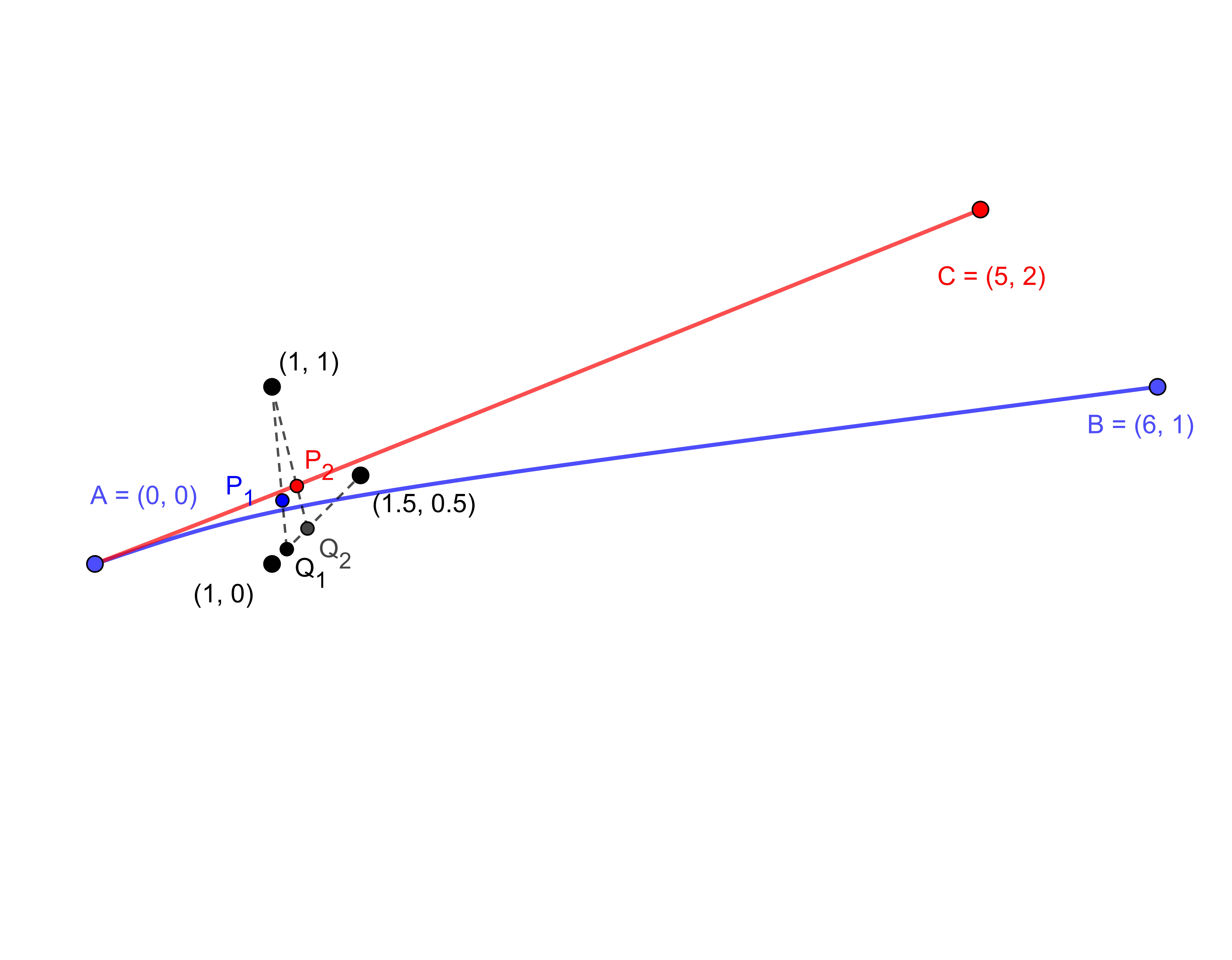}
    \caption{When the endpoint changes from $B$ to $C$ and $k$ remains the same, all points except $A$ move in the direction $P_1P_2$ which is the same as $Q_1Q_2$ or remain where they are. In the perpendicular direction, they move to the left, away from $\gamma_1$.}
    \label{fig:planar-curve-changing-B}
\end{figure}

\subsubsection{Angular Resolution}\label{sec:angres}
We next show that for each vertex, $v\in G$, in this drawing,
we achieve angular resolution $\Omega(1/{\rm degree}(v))$.

\begin{lemma}
Let $v\in G$ be a vertex with degree ${\rm degree}(v)$. The angular resolution
of $v$ in our drawing of $G$ using cubic \Bezier{} curves is 
$\Omega(1/{\rm degree}(v))$.
\end{lemma}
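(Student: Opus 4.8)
The plan is to bound from below the angle between any two consecutive edges at $v$ in the drawing, using the fact that each \Bezier{} curve $\gamma$ leaves $v$ tangent to the segment $\overline{vP}$, where $P$ is its repeated control point. So the angular resolution question reduces entirely to understanding the directions of the vectors $P - v$ (equivalently, in the normalized frame of \cref{drawingdesc}, the direction from the origin $A$ toward $P$) over all edges incident to $v$, together with the fact that there are only a constant number of regions (eight congruent ones) into which these edges are distributed.

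First I would fix a vertex $v = A$ and one of its incident port/free region pairs, and recall from \cref{drawingdesc} that for an edge $(A,B)$ through that region we have $P = (1-k)(1,1) + kQ$ with $Q = (1-s)(1,0) + s(3/2,1/2)$, $s = b_2/b_1 \in [0,1)$, and $k = i/(d+1)$ with $i \in \{1,\dots,d\}$ and $d = {\rm degree}(v)$. Thus the outgoing tangent direction at $A$ is the direction of the point $P$ (since $A$ is the origin). I would first argue that across the (constant number of) distinct regions, the corresponding families of directions lie in disjoint angular sectors separated by $\Omega(1)$ — this is essentially the content of the planarity lemmas (\cref{changing k}, \cref{moving B}) and the geometry of the joint box, so consecutive edges from \emph{different} regions are separated by a constant angle, which is certainly $\Omega(1/d)$. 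The substantive case is two consecutive edges through the \emph{same} ported region.

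For two edges through the same ported $R$ region (the other regions being symmetric), consecutive in the rotational order at $v$, I would compare their tangent directions. By the structure of the \texttt{OneBend} ports, consecutive such edges either differ by one unit in $i$ (so $\Delta k = 1/(d+1)$) with comparable $s$, or have the same $i$ and adjacent endpoints $B, C$ on neighboring ports. In the first case, $P$ moves along a segment of slope $1-2/s$ of length $\Theta(\Delta k) = \Theta(1/d)$ while staying a distance $\Theta(1)$ from the origin (since $P$ lies in a fixed bounded region bounded away from $A$), so the subtended angle at $A$ is $\Theta(1/d)$. In the second case, the endpoints move along a port segment of length $\Theta(1)$ subdivided into $d$ evenly spaced ports, again producing an angular change of $\Omega(1/d)$ in the direction of $P$ (using that $s = b_2/b_1$ changes by $\Omega(1/d)$, hence $Q$ moves $\Omega(1/d)$ along its defining segment, hence $P$ moves $\Omega(1/d)$, all at distance $\Theta(1)$ from $A$). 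Combining, any two rotationally consecutive edges at $v$ subtend an angle $\Omega(1/d) = \Omega(1/{\rm degree}(v))$ at $v$.

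The main obstacle I anticipate is not any single estimate but making the case analysis clean: one must verify that the map from the discrete edge-labels $(i, B)$ to tangent directions is \emph{monotone} within each region (so that "rotationally consecutive at $v$" corresponds to "consecutive labels"), and that the regions' direction-sectors are genuinely separated by a constant. Monotonicity within a region follows from the same convex-combination structure used in \cref{changing k} and \cref{moving B} — as $k$ decreases the point $P$ moves along a fixed-slope line, and as the endpoint advances the point $Q$ (hence $P$) advances along a fixed-slope line — so the direction of $P$ from $A$ varies monotonically; I would cite those lemmas rather than redo the computation. Once monotonicity and the constant lower bound on distance $\lvert AP \rvert$ are in hand, the angular resolution bound is immediate, and the matching $O(1/{\rm degree}(v))$ upper bound (giving the $\Theta$ claimed in the section heading) follows because $d$ edges are packed into a bounded total angular range at $v$.
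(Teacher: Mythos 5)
Your overall strategy is the paper's: the tangent of $\gamma$ at $A$ points toward the repeated control point $P$, so the angular resolution reduces to lower-bounding the angles $\angle P_1AP_2$ between the directions of the control points of consecutive edges, using $|P_1P_2|=\Omega(1/d)$ (from $\Delta k\geq 1/(d+1)$) together with $|AP|=O(1)$, and using the monotonicity statements of \cref{changing k} and \cref{moving B} to reduce a pair of genuinely distinct edges to a pair of ``parallel'' edges with the same endpoints and different $k$. The paper's proof does exactly this, making the angle bound explicit via the law of sines to get $\sin\angle P_1AP_2\geq 1/(\sqrt{10}\,d)$.

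Three of your intermediate claims need repair, though none is fatal to the plan. First, ``$P$ moves $\Theta(1/d)$ while staying at distance $\Theta(1)$ from the origin, so the subtended angle is $\Theta(1/d)$'' is not valid as stated: a displacement of length $\Theta(1/d)$ at distance $\Theta(1)$ subtends angle $0$ if it is radial. You must also check that the segment $P_1P_2$ (which has slope $1-2/s\leq-1$) is transverse to the direction $AP$ (which has positive slope for $P$ in the open first quadrant); the paper encodes this as the bound $\sin\angle P_1P_2A\geq 1/\sqrt{2}$ in its law-of-sines step. Second, the claim that the direction sectors of different regions are separated by $\Omega(1)$ is false: an edge with $i=1$ has $k=1/(d+1)$, so its tangent direction is within $O(1/d)$ of the corner direction $(1,1)$, which is the boundary with the adjacent region. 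Only $\Omega(1/d)$ separation holds, and it is obtained by the same $\angle P_1AP_2$ argument applied against the boundary values $k=0$ and $k=1$; this is exactly how the paper closes the cross-region case. Third, your sub-case ``same $i$, adjacent endpoints, gives angular change $\Omega(1/d)$'' is both unnecessary and incorrect: since $\Delta P=k\,\Delta Q$, for $k=1/(d+1)$ the displacement of $P$ can be $O(1/d^2)$. What is actually needed---and what the paper uses---is only that changing the endpoint with $k$ fixed does not \emph{decrease} the angle (\cref{moving B}: the new curve lies above the old one), so the quantitative $\Omega(1/d)$ comes entirely from the change in $k$. Note also that two distinct edges through the same ported region always occupy distinct ports and hence always differ in $i$, so the right combination is to interpolate through the hypothetical parallel edge rather than to split into an either/or case analysis.
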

\begin{proof}
Without loss of generality, let us assume that the vertex $v$ is at $A=(0,0)$ and has degree $d$. We will consider the angle between any pair of adjacent edges, at least one of which is ported through the $R$ region. The other cases are similar. We first consider the case when both edges are ported through the $R$ region.
The tangent of each \Bezier{} curve for an edge incident to $v$,
at the point of contact is the same direction as the line segment between $A$ and the control point $P$. For a pair of curves with the same pair of endpoints but different values of $P$, that is, $P_1$ and $P_2$ due to different values of $k$ due to different values of $i$, consider the triangle $P_1AP_2$. $P_1P_2$ has length at least $1/(\sqrt2d)$. From the law of sines,
\begin{eqnarray*}
\frac{1/(\sqrt{2}d)}{\sin{\angle P_1AP_2}}=\frac{P_1A}{\sin{\angle P_1P_2A}},\\
\frac{1}{\sqrt{2}d\sin{\angle P_1AP_2}}\leq\frac{\sqrt{10}/2}{1/\sqrt{2}},
\mbox{~~~and}\\
\sin{\angle P_1AP_2}\geq\frac{1}{\sqrt{10}d}.
\end{eqnarray*}
From, the Taylor expansion of $\sin^{-1}$,
\[\angle P_1AP_2\geq \frac{1}{\sqrt{10}d}. \]
Thus,
we have $\Omega(1/d)$ angle between this pair of parallel edges. From \cref{moving B}, we know that the second curve would be above these edges and would have an even larger angle. Note that these curves also leave at least the same angle with the boundaries of ported $R$ region since we use $k=i/(d+1)$ and $i$ ranges from $1$ to $d$ and $0$ and $1$ correspond to the corners. This means that this minimum angle is maintained with any edge incident to $v$ through any other region as well.
\end{proof}

\subsubsection{Curvature}\label{sec:curvature}
We next show that the curvature of our construction is $O(\sqrt{n})$.

\begin{lemma}
    The curvature of $\gamma$ is $O(\sqrt{b_1})\subseteq O(\sqrt{n})$.
\end{lemma}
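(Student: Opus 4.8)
The plan is to reduce the curvature to an explicit rational function of $t$, bound its numerator by $O(b_1)$, bound its denominator from below by a two-regime estimate in $t$, and then balance the two regimes.

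First I would use the identity $3t^2(1-t)+3t(1-t)^2=3t(1-t)$ to rewrite the coordinate functions of $\gamma$ as $x(t)=b_1t^3+3P_xt(1-t)$ and $y(t)=b_2t^3+3P_yt(1-t)$. Setting $u(t)=b_1t^2+P_x(1-2t)$ and $v(t)=b_2t^2+P_y(1-2t)$, this gives $x'=3u$, $y'=3v$, $x''=6(b_1t-P_x)$ and $y''=6(b_2t-P_y)$, and a short computation — using the identities $b_2u-b_1v=(1-2t)c$ and $P_xv-P_yu=t^2c$ with $c:=b_2P_x-b_1P_y$ — collapses the curvature numerator to $x'y''-y'x''=18\,c\,t(1-t)$, so that
\[
\kappa(t)=\frac{2\,|c|\,t(1-t)}{3\,\bigl(u(t)^2+v(t)^2\bigr)^{3/2}}.
\]
From the definitions $P_x=1+ks/2\in[1,\tfrac32)$, $P_y=1-k+ks/2\in(0,1]$ and $b_2=sb_1$ with $s\in[0,1)$, so $|c|=b_1\,|sP_x-P_y|\le\tfrac32 b_1$. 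It therefore suffices to show $t(1-t)/(u^2+v^2)^{3/2}=O(1/\sqrt{b_1})$; since the drawing lies in an $O(n)\times O(n)$ box we have $b_1=O(n)$, and the claimed $O(\sqrt n)$ follows.

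For the lower bound on $u^2+v^2$ I would first note, using the joint-box geometry of the \texttt{OneBend} drawing (\cref{onebendalgprop}), that $b_1$ is bounded below: $B$ lies outside $A$'s joint box — a rotated square about $A$ whose nearest boundary point is at distance $\sqrt2(\mathrm{degree}(A)+1)$ — so $\|B-A\|\ge\sqrt2(\mathrm{degree}(A)+1)$, while $\|B-A\|<\sqrt2\,b_1$ since $b_2<b_1$; hence $b_1>\mathrm{degree}(A)+1\ge2$. In particular $b_1>P_x$, so the convex quadratic $u(t)=b_1t^2+P_x(1-2t)$ has positive minimum value $P_x(1-P_x/b_1)>0$ and never vanishes on $[0,1]$; this rules out the only degenerate case, a near-cusp where $x'$ and $y'$ vanish together (which by $P_xv-P_yu=t^2c$ can occur only where $u=0$, i.e.\ only when $b_1\le P_x$). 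Now split on $b_1$. If $2\le b_1\le16$ then $u(t)\ge P_x(1-P_x/b_1)\ge\tfrac14$, so $u^2+v^2$ is bounded below by a constant and $\kappa=O(|c|)=O(b_1)=O(1)$. If $b_1\ge16$ one uses $u(t)\ge\tfrac12 b_1t^2$ for all $t$ (valid since $2P_x<b_1$) together with $u(t)\ge P_x(1-2t)\ge\tfrac12$ for $t\le\tfrac14$ (hence for $t\le1/\sqrt{b_1}\le\tfrac14$): for $t\le1/\sqrt{b_1}$ the bound $u^2+v^2\ge\tfrac14$ gives $\kappa=O(|c|t)=O(|c|/\sqrt{b_1})=O(\sqrt{b_1})$; for $1/\sqrt{b_1}\le t\le\tfrac12$ the bound $u^2+v^2\ge\tfrac14 b_1^2t^4$ gives $\kappa=O\!\bigl(|c|/(b_1^3t^5)\bigr)=O(|c|/\sqrt{b_1})=O(\sqrt{b_1})$; and for $\tfrac12\le t\le1$ the bound $u\ge b_1/8$ gives $\kappa=O(|c|/b_1^3)=O(1/b_1^2)$. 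Taking the maximum over the ranges yields $\kappa(t)=O(\sqrt{b_1})$ throughout $[0,1]$.

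The step I expect to be the main obstacle is obtaining the exponent $\tfrac12$ rather than $1$: the naive estimate $u^2+v^2\ge\text{const}$ only gives $\kappa=O(b_1)$, and improving it requires exploiting that $u^2+v^2$ is $\Omega(b_1^2t^4)$ once $t\gtrsim1/\sqrt{b_1}$ while remaining $\Omega(1)$ for smaller $t$, then balancing the two behaviours at $t\asymp1/\sqrt{b_1}$. A secondary subtlety is excluding a near-cusp, i.e.\ showing $u$ has no zero in $[0,1]$, which is exactly where the geometric fact that $B$ stays a bounded distance from the control point $P$ is used.
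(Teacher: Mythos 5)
Your proof is correct, and it takes a genuinely different route from the paper's. The paper fixes $k$, $s$, $t$ and studies $\kappa^2/b_1$ as a function of $b_1$: it computes $\partial(\kappa^2/b_1)/\partial b_1$ symbolically, observes it is a downward quadratic in $b_1$ with at most one relevant root, and then bounds $\kappa^2/b_1$ at the base point $b_1=4$ and at that critical point, with the algebra certified by Mathematica. You instead exploit the repeated control point to collapse $\gamma$ to $Bt^3+3Pt(1-t)$, from which the two identities $b_2u-b_1v=(1-2t)c$ and $P_xv-P_yu=t^2c$ give the exact closed form $x'y''-y'x''=18\,c\,t(1-t)$; after the trivial bound $|c|\le\tfrac32 b_1$ the whole problem reduces to lower-bounding the speed, which you do by the two-regime split at $t\asymp 1/\sqrt{b_1}$. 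I checked the algebra ($u\ge\tfrac12 b_1t^2$ needs only $b_1>2P_x$, $u\ge P_x(1-2t)$ handles small $t$, and the regime bounds each give $O(\sqrt{b_1})$) and it holds. Your approach buys hand-checkable explicit constants, a transparent explanation of where the exponent $\tfrac12$ comes from (the balance point of the two regimes), and an explicit treatment of regularity of the curve, which the paper's proof does not address; the paper's approach is a direct computer-algebra certificate over the whole parameter range but is opaque. The one ingredient you import that the paper's proof does not use is the lower bound $b_1\ge 2$ from the joint-box geometry of the \texttt{OneBend} drawing (needed both for $b_1>P_x$, i.e.\ no cusp, and for your constants); this is a legitimate property of that drawing, and the paper implicitly makes a comparable restriction by anchoring its analysis at $b_1=4$, so this is not a gap, but you should state it as an explicit hypothesis on the input drawing when writing the proof up.
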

\begin{proof}
    We denote the curvature by $\kappa$.
By the definition of curvature,
    \[
\frac{\partial (\kappa^2/b_1)}{\partial b_1}=\frac{\begin{matrix}8 (t-1)^2 t^2 \left(k \left(s^2-s+2\right)+2 (s-1)\right)^2 &\\ \left(\begin{matrix}-2 \left(5 b_1^2 \left(s^2+1\right)
   t^4-8 b_1 (s+1) t^3+4 t^2 (b_1 s+b_1-2)+8 t-2\right)&\\+4 k (s-1) (2 t-1) \left(b_1 s t^2+2
   t-1\right)+k^2 \left(s^2-2 s+2\right) (1-2 t)^2\end{matrix}\right)\end{matrix}}{9 \left(\begin{matrix}2 b_1^2 \left(s^2+1\right) t^4-2 k (s-1) (2
   t-1) \left(b_1 s t^2-4 t+2\right)&\\-8 b_1 (s+1) t^3+4 t^2 (b_1 s+b_1+4)+k^2 \left(s^2-2
   s+2\right) (1-2 t)^2-16 t+4\end{matrix}\right)^4} .
\]
In terms of $b_1$, the numerator of 
$\frac{\partial (\kappa^2/b_1)}{\partial b_1}$ is a quadratic polynomial with 
the coefficient of the degree-$2$ term being negative. 
The denominator is a fourth power and hence always non-negative.
    For values of $b_1$, $s$, $k$ and $t$ that we care about, $\frac{\partial (\kappa^2/b_1)}{\partial b_1}$ has at most one root,
\[\frac{-(2 t-1) \left(\begin{matrix}\sqrt{2 k^2 \left(7 s^4-14 s^3+17 s^2-10 s+10\right)+8 k \left(7 s^3-5
   s^2+3 s-5\right)+8 \left(7 s^2+4 s+7\right)}&\\-2 k s^2+2 (k-2) s-4\end{matrix}\right)}{10 \left(s^2+1\right) t^2},
\] 
if $t<0.222$ and no roots otherwise. 
It is negative everywhere except possibly before this root. 
In addition, $\kappa^2/b_1$ decreases as $b_1$ increases after this point for fixed $k$, $s$ and $t$, because this a quadratic with a negative leading coefficient.
At $b_1=4$, the curvature is 
\[
\frac{128 (t-1)^2 t^2 \left(k \left(s^2-s+2\right)+2 (s-1)\right)^2}{9 \left(\begin{matrix}k^2 \left(s^2-2 s+2\right) (1-2 t)^2&\\-4 k
   (s-1) (2 t-1) \left(2 s t^2-2 t+1\right)&\\+4 \left(8 \left(s^2+1\right) t^4-8 (s+1) t^3+4 (s+2) t^2-4
   t+1\right)\end{matrix}\right)^3}.
\] 
This is less than $3$ for all values of $s$, $k$ and $t$ between $0$ and $1$.
    We evaluate $\kappa^2/b_1$ at the possible root of $\frac{\partial (\kappa^2/b_1)}{\partial b_1}$ and get the following:\footnote{We verified 
   this expression, as well as the others in this proof, using 
   Mathematica~\cite{Mathematica}.}
\[
\frac{12500\begin{matrix} \left(s^2+1\right)^2  \left(k \left(s^2-s+2\right)+2 (s-1)\right)^2&\\ \left(\begin{matrix}2 k s^2-2
   (k-2) s+4&\\-\sqrt{\begin{matrix}2 k^2
   \left(7 s^4-14 s^3+17 s^2-10 s+10\right)&\\+8 k \left(7 s^3-5 s^2+3 s-5\right)+8 \left(7 s^2+4 s+7\right)\end{matrix}}\end{matrix}\right)\end{matrix}(t-1)^2}{243 \left(\begin{matrix}4 k^2 \left(2 s^4-4 s^3+7 s^2-5 s+5\right)&\\+k (s-1) \left(\begin{matrix}s \left(\sqrt{\begin{matrix}2 k^2
   \left(7 s^4-14 s^3+17 s^2-10 s+10\right)&\\+8 k \left(7 s^3-5 s^2+3 s-5\right)+8 \left(7 s^2+4 s+7\right)\end{matrix}}-8\right)&\\+32
   s^2+40\end{matrix}\right)&\\+2 \left(\begin{matrix}s \left(\sqrt{\begin{matrix}2 k^2
   \left(7 s^4-14 s^3+17 s^2-10 s+10\right)&\\+8 k \left(7 s^3-5 s^2+3 s-5\right)+8 \left(7 s^2+4 s+7\right)\end{matrix}}-8\right)&\\+\sqrt{\begin{matrix}2 k^2
   \left(7 s^4-14 s^3+17 s^2-10 s+10\right)&\\+8 k \left(7 s^3-5 s^2+3 s-5\right)+8 \left(7 s^2+4 s+7\right)\end{matrix}}+16 s^2+16\end{matrix}\right)\end{matrix}\right)^3(2 t-1)^5}.
\] 
This is a product of two expressions, one independent of $t$ and the other entirely in $t$. The magnitude of the expression in $t$ is bounded by $12$ for all $0\leq t\leq0.222$. The magnitude of the other expression is bounded by $1/128$ for all values of $s$ and $k$ between $0$ and $1$. Putting these together we get that $\kappa^2/b_1$ is bounded by $12/128$.
    The curvature of these curves is $O(\sqrt{b_1})$ and hence $O(\sqrt{n})$ since $b_1$ is $O(n)$.
\end{proof}
  
To sum up, we have the following theorem.

\begin{theorem}
Given an $n$-vertex planar graph, $G$, we can draw $G$ 
in an $O(n)\times O(n)$ grid and $\Omega(1/{\rm degree}(v))$ angular resolution,
for each vertex $v\in G$,
using a single cubic \Bezier{} curve with curvature $O(\sqrt{n})$
per edge in $O(n)$ time.
\end{theorem}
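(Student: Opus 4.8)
The plan is to assemble the ingredients built up in this section into a single construction. I would start by running the \texttt{OneBend} algorithm of Cheng, Duncan, Goodrich, and Kobourov on $G$; as recalled in \cref{onebendalgprop}, this produces in $O(n)$ time a one-bend drawing $D$ inside an $O(n)\times O(n)$ grid, together with the joint-box decomposition of each vertex into free and port regions. I keep the vertex coordinates of $D$ exactly as they are. For each edge $(A,B)$ I then apply the angle-preserving affine normalization of \cref{drawingdesc} (sending the port-region endpoint to the origin and the free-region endpoint into the prescribed wedge) and replace the one-bend polyline by the cubic \Bezier{} curve $\gamma$ with repeated control point $P=(1-k)(1,1)+kQ$ defined there. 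Each such $P$, hence each curve, is obtained from a constant number of arithmetic operations on the edge's data, so all $O(n)$ curves are produced in $O(n)$ additional time, giving the claimed $O(n)$ running time; and since each $\gamma$ stays in $\mathrm{conv}\{A,P,B\}$, with $P$ within a bounded distance of $A$, the whole drawing stays within an $O(n)\times O(n)$ box.

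Next I would invoke the three quality lemmas of this section, one per guarantee. For angular resolution, the lemma in \cref{sec:angres} shows that the tangent of $\gamma$ at an endpoint is parallel to $AP$, that control points $P$ for consecutive ports differ by at least $1/(\sqrt2\,d)$, and that the law of sines turns this into an $\Omega(1/\deg(v))$ angle between consecutive edge curves at $v$; treating the $k=0$ and $k=d+1$ cases as the two region boundaries shows the same gap survives against an edge entering $v$ through a neighbouring region, so the bound holds for every incident pair. For curvature, the lemma in \cref{sec:curvature} shows that after the normalization the only remaining scale is the parameter $b_1=O(n)$ and that $\kappa^2/b_1$ is bounded by an absolute constant over the relevant ranges of $(b_1,s,k,t)$ (via the explicit rational expressions there, checked symbolically), so $\kappa=O(\sqrt{b_1})=O(\sqrt n)$ for every edge.

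The remaining, and I expect most delicate, step is planarity: no two edge curves cross. For a pair of curves sharing an endpoint this is exactly what \cref{sec:planarity} establishes --- the region lemma confines each curve to the wedge of its port/free region, curves through different regions of a vertex are separated because those wedges are disjoint, \cref{changing k} handles two curves with the same endpoints but different $k$, \cref{moving B} handles equal $k$ but different far endpoints, and combining these two facts handles an arbitrary pair through one region via the intermediate curve with the upper edge's $k$ and the lower edge's far endpoint. For a pair of curves with no common endpoint I would argue that each curve $\gamma$ for $(A,B)$ stays within the portion of $D$ that \texttt{OneBend} reserves for that edge --- concretely, within $\mathrm{conv}\{A,P,B\}$, which lies in $A$'s joint box near $A$ and otherwise follows the segment toward $B$ inside the channel the one-bend routing used --- so disjointness of the \texttt{OneBend} routing is inherited. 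Making this containment precise against the actual port placement $(d+1,d+1)+(i,-i)$ and the admissible positions of $B$ is the fiddly bookkeeping; once it is in place all edge pairs are covered, and the theorem follows by combining the four parts above.
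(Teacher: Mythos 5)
Your proposal follows essentially the same route as the paper: run \texttt{OneBend}, substitute the cubic \Bezier{} curve $\gamma$ with repeated control point $P$ for each one-bend polyline, and combine the planarity, angular-resolution, and curvature lemmas of \cref{sec:planarity}, \cref{sec:angres}, and \cref{sec:curvature} with the $O(n)$ construction time. The one step you rightly flag as delicate---non-crossing of curves for edges with no shared endpoint---is in fact left implicit in the paper as well, which only argues the common-endpoint case, so your sketch of confining each curve to the channel reserved by \texttt{OneBend} is, if anything, more explicit than the source.
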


\section{Conclusion}
In this paper, we have studied methods for drawing
1-planar and planar graphs using cubic \Bezier{} curves with
bounded curvature.
Possible directions for future work and open problems include the following:
\begin{itemize}
\item
Can the curvature for drawing an $n$-vertex planar graph using 
a single \Bezier{} curve for each edge be improved from
$O(\sqrt{n})$ while still maintaining an angular resolution of
$\Omega(1/{\rm degree}(v))$, for each vertex $v$ in the drawing?
\item
Our algorithm used to produce a RAC drawing of an $n$-vertex 1-planar graph
$G$,
given a combinatorial 1-planar drawing of $G$, 
is based on the recursive construction of 
Bekos, Didimo, Liotta, Mehrabi, and Montecchiani~\cite{BEKOS201748}.
This allows us to achieve bounded curvature for each edge in the
drawing, but it does not give us a bound on the curvature
in terms of $n$. Is it possible to achieve such a bound?
\item
Our algorithm used to produce a RAC drawing of an $n$-vertex 1-planar graph
uses ``S''-shaped \Bezier{} curves. 
Can the same result be achieved with ``C''-shaped
\Bezier{} curves, e.g., quadratic \Bezier{} curves, which are arguably
more aesthetically pleasing?
\item
As mentioned above,
Lombardi drawings are drawings
where edges are drawn using circular arcs, 
e.g., see~\cite{roman12,duncan2012lombardi,duncan2012planar,eppstein2012planar,eppstein12,kindermann2017lombardi}, 
but circular arcs are not \Bezier{} curves.
Can every graph with a Lombardi drawing also be drawn with the same edge
crossings using a single \Bezier{} curve of bounded curvature for
each edge?
\end{itemize}

\bibliographystyle{plainurl}
\bibliography{ref}
\end{document}